\newtheorem{remark}{Remark}
\newtheorem{definition}{Definition}
\newtheorem{problem}{Problem}
\newtheorem{eg}{Example}
\newtheorem{lemma}{Lemma}
\newtheorem{theorem}{Theorem}
\title{\LARGE \bf
Opacity Enforcement by Edit Functions Under Incomparable Observations
}
\author{% <-this % stops a space
	Wei Duan,
	Ruotian Liu,
	Maria Pia Fanti,
	Christoforos N. Hadjicostis,
	and Zhiwu Li
	%\thanks{*This work was supported in part by the National R\&D Program of China under Grant No. 2018YFB1700104, and the National Natural Science Foundation of China under Grants 61603285 and 61873342.} %($Corresponding$ $author$: $Chunfu$ $Zhong$).}% <-this % stops a space
	%\thanks{Wei Duan is with the School of Electro-Mechanical Engineering, Xidian University, Xi'an 710071, China (e-mail: weiduan1103@outlook.com).}
	%
    \thanks{Wei Duan, Ruotian Liu and Maria Pia Fanti are with the Department of Electrical and Information Engineering, Polytechnic University of Bari, Bari 71025, Italy (emails: \{wei.duan, ruotian.liu, mariapia.fanti\}@poliba.it).}  
	\thanks{Christoforos N. Hadjicostis is with the Department of Electrical and Computer Engineering, University of Cyprus, Nicosia 1678, Cyprus (e-mail: hadjicostis.christoforos@ucy.ac.cy).}
	\thanks{Zhiwu Li is with the School of Electro-Mechanical Engineering, Xidian University, Xi’an 710071, China, and also with the Institute of Systems Engineering, Macau University of Science and Technology, Taipa 999078, Macau SAR, China (e-mail: zhwli@xidian.edu.cn).}
}
\begin{document}
%\title{Opacity Enforcement by Edit Functions Under Incomparable Observations}

% \author{% <-this % stops a space
% 	Wei Duan,
% 	Ruotian Liu,
% 	Maria Pia Fanti,~\IEEEmembership{Fellow, IEEE},
% 	Christoforos N. Hadjicostis,~\IEEEmembership{Fellow, IEEE},\\
% 	and Zhiwu Li,~\IEEEmembership{Fellow, IEEE}	
% 	\thanks{*This work was supported in part by the National R\&D Program of China under Grant No. 2018YFB1700104, and the National Natural Science Foundation of China under Grants 61603285 and 61873342.} %($Corresponding$ $author$: $Chunfu$ $Zhong$).}% <-this % stops a space
% 	\thanks{Wei Duan is with the School of Electro-Mechanical Engineering, Xidian University, Xi'an 710071, China (e-mail: weiduan1103@outlook.com).}
% 	%
%     \thanks{Ruotian Liu and Maria Pia Fanti are with the Department of Electrical and Information Engineering, Polytechnic University of Bari, Bari 71025, Italy (emails: \{ruotian.liu, mariapia.fanti\}@poliba.it).}  
% 	%
% 	\thanks{Christoforos N. Hadjicostis is with the Department of Electrical and Computer Engineering, University of Cyprus, Nicosia 1678, Cyprus (e-mail: hadjicostis.christoforos@ucy.ac.cy).}
% 	%	
% 	\thanks{Zhiwu Li is with the School of Electro-Mechanical Engineering, Xidian University, Xi’an 710071, China, and also with the Institute of Systems Engineering, Macau University of Science and Technology, Taipa 999078, Macau SAR, China (e-mail: zhwli@xidian.edu.cn).}
% }

%\maketitle
\maketitle
\thispagestyle{empty}
\pagestyle{empty}

\begin{abstract}
As an information-flow privacy property, opacity characterizes whether a malicious external observer (referred to as an intruder) is able to infer the secret behavior of a system.
This paper addresses the problem of opacity enforcement using edit functions in discrete event systems modeled by partially observed deterministic finite automata.
A defender uses the edit function as an interface at the output of a system to manipulate actual observations through insertion, substitution, and deletion operations so that the intruder will be prevented from inferring the secret behavior of the system.
Unlike existing work which usually assumes that the observation capabilities of the intruder and the defender are identical, we consider a more general setting where these two entities may observe \emph{incomparable} subsets of events generated by the system.
%In particular, we consider a more general setting where the observation capability (to obtain observations from the system) of the edit function is \emph{incomparable} to that of the intruder, i.e., they may observe incomparable sets of events from the system.
To characterize whether the defender has the ability to enforce opacity of the system under this setting, the notion of \emph{$ic$-enforceability} is introduced.
Then, the opacity enforcement problem is transformed to a two-player game, with imperfect information between the system and the defender, which can be used to determine a feasible decision-making strategy for the defender.
Within the game scheme, an \emph{edit mechanism} is constructed to enumerate all feasible edit actions following system behavior.
We further show that an $ic$-enforcing edit function (if one exists) can be synthesized from the edit mechanism to enforce opacity.
\end{abstract}

\begin{keywords}
	Discrete event system, opacity enforcement, edit function, game theory, incomparable observations.
\end{keywords}

% \begin{IEEEkeywords}
% 	Discrete event system, opacity enforcement, edit function, game theory, incomparable observations.
% \end{IEEEkeywords}

\section{Introduction}\label{sec:introduction}
%Security and privacy problems are increasingly important in all aspects of our lives.
%Introduced in the computer science community, 
Opacity is an information-flow privacy property that characterizes whether confidential information (referred to as the secret behavior) of a system can be inferred by a malicious intruder.
Opacity analysis and enforcement have become active research topics in discrete event systems (DESs) \cite{cassandras2021introduction,2020Chrisbook}.
Specifically, the intruder is generally assumed to have full knowledge of a system’s structure and attempts to infer the secret behavior of the system by implementing passive attacks (i.e., eavesdropping at the output of the system). 
A system is said to be opaque if, for any secret behavior,
there exists at least another non-secret behavior that appears identical to the intruder (i.e., it generates the same sequence of observations); this implies that the intruder can never infer the system's secret behavior.

According to diverse representations of the secret behavior of a system, several notions of opacity have been correspondingly formalized in the literature, including language-based opacity \cite{2011Lin, 2007Badouel}, current-state opacity \cite{2007Chris}, initial-state opacity \cite{2008Chris}, initial-and-final-state opacity \cite{2013Wu}, and so on.
Note that the above notions are developed in the context of automata models.
Alternative mathematical models for DESs, e.g., Petri nets, have also been widely used to formulate the notions of opacity (see, for example, \cite{bryans2005modelling, tong2015verification, tong2016a, tong2016b, zhu2022online}).

When a system is not opaque, the problem of opacity enforcement arises and has been extensively addressed following two main types of approaches.
In chronological order, the use of supervisory control theory was first proposed to construct minimally restrictive opacity-enforcing supervisory controllers \cite{2010Dubreil, 2012Chrisa}.
The idea is to restrict the system's behavior by a well-designed supervisor that disables some controllable events in order to avoid a violation of opacity. 
In particular, Yin and Lafortune \cite{2015Yin} proposed a game theoretical approach to embed all feasible supervisors in a finite structure.
Tong $\emph{et al.}$ \cite{2018Tong} adopted a similar approach but focused on a more general setting that considers incomparable observations between the intruder and the supervisor.

Instead of restricting system behavior, obfuscation techniques \cite{2014Wu,2016Wu, 2018Wu, 2018Ji, 2019Ji, 2020Moha, 2022Li, 2023Liu} are provided as an alternative approach to confuse the intruder by manipulating observations generated by the system via an output interface of the system. 
Specifically, Wu and Lafortune \cite{2014Wu,2016Wu} designed an insertion function, which takes as input an observable sequence generated by a given system and outputs a modified sequence by inserting fictitious events before each actual observation. 
The authors of \cite{2018Wu} proposed edit functions that are similar to insertion functions, but more powerful as they can alter the output behavior of a system by inserting, deleting, or substituting events.
Ji $\emph{et al.}$ \cite{2018Ji}, \cite{2019Ji} extended the setting of insertion and edit functions, respectively, into a more general case where the insertion and edit functions may become known to the intruder. 

Note that the common assumption regarding observation capabilities in opacity enforcement problems using obfuscation techniques \cite{2014Wu,2016Wu, 2018Wu, 2018Ji, 2019Ji, 2020Moha, 2022Li, 2023Liu} is that both the intruder and the insert/edit function possess partial but identical observation capabilities of the system.
In our prior work \cite{2023Duan}, we considered a scenario in which the intruder and the edit function obtain asymmetric information from the system, e.g., the edit function observes a subset of the events observed by the intruder. 
However, it is imperative to acknowledge that such an asymmetric information scenario is often insufficient to model real-world complexities effectively.
For instance, the issue of information confrontation always exists in network systems \cite{mitchell2015modeling, 2018GAN}, where both sides of the confrontation (defenders and intruders) may capture different information from a given system.
A more comprehensive representation of this issue would be that the defenders and intruders have imperfect information, which refers to the lack of complete or precise data during the decision-making processes of either side.
%In other words, although they can accurately understand the network status and topology, they may lack full capability of the each other's knowledge, e.g., they cannot distinguish exact actions undertaken by their respective opponents.

To characterize the nature of imperfect information obtained by the intruder and the defender on the problem of opacity enforcement via edit functions,\footnote{Unlike \cite{2018Ji}, \cite{2019Ji}, we only consider the case where the edit function is unknown to the intruder for the sake of simplicity.} in this paper we consider a more general setting (illustrated in Fig. \ref{illustration}).
We refer to this setting as the \emph{incomparable observation setting},\footnote{This is relevant to the verification of opaque systems and the evaluation of security and privacy aspects, as it captures the intuitive perception that the intruder and the defender may have partial and incomparable observation capabilities of a system.} under the assumption that the observations of a system captured by the intruder are incomparable to those obtained by the defender (edit function), i.e., their observable event sets do not necessarily have any inclusion relation.
In such case, the intruder and the edit function have imperfect information of the system since their estimates of the system obtained from their incomparable observations are not necessarily the same.

\begin{figure}[htbp]
	\centering
	\includegraphics[width=0.45\textwidth]{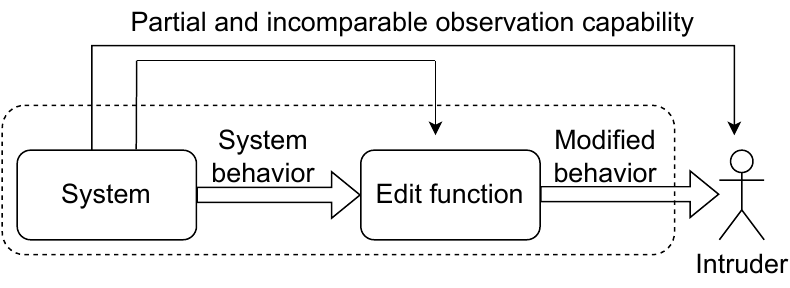}
	\caption{The edit mechanism.}
	\label{illustration}
\end{figure}

To obtain edit mechanisms under incomparable observations, we introduce the property of $ic$-enforceability that characterizes whether an edit function has the ability to enforce opacity under the incomparable observation setting.
Specifically, we construct an edit mechanism in a two-player game scheme with imperfect information between the system and the edit function.
In the end, we show that the edit function is $ic$-enforcing if it can be  synthesized from the edit mechanism. 

The remainder of this paper is organized as follows. 
Section~\ref{sec:preliminaries} presents the system model, the notion of current-state opacity, and the definition of the edit function. 
In Section~\ref{sec:formulation}, we introduce the incomparable observation setting, propose the $ic$-enforceability property, and formally outline the problem.
Section~\ref{sec:game} constructs a two-player game structure under the incomparable observation setting and identifies infeasible edit actions by proposing an appropriate utility function.
Then, Section~\ref{sec:mechanism} describes the procedure of building an edit mechanism and establishes a necessary and sufficient condition for edit functions to be $ic$-enforcing.
%Section~\ref{sec:editautomaton} develops an algorithm to synthesize $iu$-enforcing and $ik$-enforcing edit functions based on the corresponding edit mechanisms.
Section~\ref{sec:discussion} analyzes the complexity of the edit mechanism and engages in a comparative analysis of the proposed approach against existing methodologies.
Finally, conclusions along with future directions are reached in Section~\ref{sec:conclusion}.

\section{Preliminaries}
\label{sec:preliminaries}
Let $E$ be a finite set of events and $E^*$ denote the set of all finite length strings (finite sequences of events) over $E$, including the empty string $\varepsilon$. The length of a string $\lambda$ is the number of events in $\lambda$, denoted by $|\lambda|$, and the length of the empty string is denoted by $|\varepsilon|=0$. A language $L\subseteq E^*$ is a subset of finite-length strings. 
Given strings $u$ and $v$, $uv$ stands for the concatenation of $u$ and $v$, i.e., the sequence of events in $u$ followed by the sequence of events in $v$.

\subsection{System Model}
In this paper, we model a DES with a deterministic\footnote{We focus on deterministic finite automata to keep notation simple, but the results in this paper can be extended to nondeterministic finite automata in a straightforward manner.} finite automaton (DFA) $G =(X, E, f, x_0)$, where $X$ is a set of states, $E$ is a set of events, $f: X\times E \rightarrow X$ is the partial transition function, and $x_0\in X$ is the initial state.
The transition function $f$ can be extended to $X\times E^*\rightarrow X$ in the usual manner: $f(x,\varepsilon)= x$  and $f(x,\alpha \lambda)=f(f(x,\alpha),\lambda)$ for $x\in X$, $\alpha\in E$, and $\lambda\in E^*$.
Note that $f(x,\alpha \lambda)$ is undefined if $f(x,\alpha)$ is undefined.
The generated language of $G$, denoted by $L(G)$, is defined as $L(G) =\lbrace \lambda\in E^*\mid f(x_0,\lambda)$ is defined$\rbrace$.

The event set $E$ in a DFA is typically partitioned into the set of observable events $E_o$ and the set of unobservable events $E_{uo}=E\setminus E_o$, to capture partial observation of the DFA.
By taking into account the set of observable events, we define the system projection as $P: E^*\rightarrow E_o^*$, where $P(\varepsilon)=\varepsilon$, $P(\alpha)=\alpha$ if $\alpha\in E_o$, $P(\alpha)=\varepsilon$ if $\alpha\in E_{uo}$, and $P(\lambda \alpha)=P(\lambda)P(\alpha)$ for all $\lambda\in E^*$ and $\alpha\in E$.

\subsection{Current-State Opacity}

In the framework of DESs, two main categories of opacity (i.e., state-based and language-based opacity) have been widely studied in the literature \cite{2011Lin, 2007Badouel, 2007Chris, 2008Chris, 2013Wu, bryans2005modelling, tong2015verification}. 
Since the work in \cite{2013Wu} shows that these notions are transferable from one to another, in the remainder of the paper we focus on current-state opacity \cite{2007Chris}, which characterizes the secret behavior of the system by a subset of states $X_S\subseteq X$.
%A system is CSO if the intruder can never infer, from its observations, whether the current state of the system is a secret state or not.
In addition, we assume that the intruder has full knowledge of the system's structure and partially observes the system through its own observation capability, i.e., it can capture a subset of  observable events $E_I\subseteq E_o$.

By taking into account the set of events observed by the intruder, we define the intruder projection as $P_I: E^*\rightarrow E_I^*$, where $P_I(\varepsilon)=\varepsilon$, $P_I(\beta)=\beta$ if $\beta\in E_I$, $P_I(\beta)=\varepsilon$ if $\beta\in E\setminus E_I$, and $P_I(\lambda\beta)=P_I(\lambda)P_I(\beta)$ for all $\lambda\in E^*, \beta\in E$.
The inverse projection of $P_I$, denoted by $P_I^{-1}: E_I^*\rightarrow 2^{E^*}$, is defined as $P_I^{-1}(\beta) = \lbrace \lambda\in E^* | P_I(\lambda) = \beta \rbrace$, where $\beta\in E_I^*$.

\begin{definition}\label{CSO1}
	Given a DFA $G=(X,E,f,x_0)$, the intruder projection $P_I$ and the set of secret states $X_S\subseteq X$, $G$ is said to be current-state opaque with respect to $P_I$ and $X_S$ if 
    for all $\lambda\in L(G)$ such that $f(x_0,\lambda)\in X_S$, there exists $\lambda'\in L(G)$ such that (i) $P_I(\lambda)=P_I(\lambda')$ and (ii) $f(x_0,\lambda')\in X\backslash X_S$.
\end{definition}

We use CSO to denote current-state opaque with respect to $P_I$ and $X_S$ for short.
A system is CSO if the intruder can never infer, based on its observations, whether the current state of the system is in a secret state or not.
%By Definition \ref{CSO1}, CSO requires that the evolution of the system to states in $X_S$ is kept uncertain to the intruder (under the intruder projection map $P_I$), at least until the current state of the system leaves $X_S$.
For the sake of simplicity, we denote the sublanguages that reach secret and non-secret states by $L_S=\lbrace \lambda\in L(G)|f(x_0,\lambda)\in X_S\rbrace$ and $L_{NS}=\lbrace \lambda\in L(G)|f(x_0,\lambda)\in X\backslash X_S\rbrace$, respectively.

%\begin{definition}\label{CSO}
%	Given a system $G$, the resilient hypothetical projection $P_{RH}$ and the set of secret states $X_S\subseteq X$, $G$ is said to be CSO with respect to $P_{RH}$ if $\forall t\in L_S, \exists t'\in  L_{NS}$ such that $P_{RH}(t)=P_{RH}(t')$.
%\end{definition}
%
%\begin{proposition}\label{ProCSO}
%	Given a system $G$, if CSO with respect to $P_{RH}$ holds, then CSO with respect to $P_I$ holds.
%\end{proposition}
%
%\begin{proof}
%	If $G$ is CSO with respect to $P_{RH}$, it holds for all $ t\in L_S$, there exists $t'\in  L_{NS}$ such that $P_{RH}(t)=P_{RH}(t')$. By the definition of $E_{RH}$, we know $E_I\subseteq E_{RH}$ and $E_I^* \subseteq E^*_{RH}$, which implies $P_{I}(t)=P_{I}(t')$. 
%	Thus, $G$ is CSO with respect to $P_I$. 
%\end{proof}
%
%\textcolor{blue}{(Will be deleted) Note that when $E_I\subseteq E_D$, the hypothetical set of events satisfies $E_{RH}=E_o$, i.e., the intruder is assumed to have the same observation capability as the edit function.
%Since this case can be solved in a manner similar to the techniques in \cite{2014Wu,2018Wu}, we consider the other case ($E_D\subseteq E_I$) in the remainder of this paper. }
%\textcolor{red}{(Changed to) We use CSO to denote CSO with respect to $P_{RH}$ for short.
%In the remainder of this paper, we mainly focus on the case $E_D\subseteq E_I$ since the other one can be solved in a manner similar to the techniques in \cite{2014Wu,2018Wu}.}

\vspace{-0.5em}
\subsection{Edit Function}

To enforce opacity, an obfuscation mechanism was proposed in \cite{2014Wu,2018Wu,2019Ji} via edit functions to manipulate actual observations generated by a system via insertion, substitution, and deletion operations of events.
The edit function is an interface placed at the output of the system and in this work we assume that the defender that implements an edit function may observe a subset of observable events of the system, namely, $E_D\subseteq E_o$. 
Note that the implementation of the edit function is assumed to be unknown to the intruder in this paper.

By taking into account the set of events observed by an edit function, the defender projection $P_D: E^*\rightarrow E_D^*$ is defined as $P_D(\varepsilon)=\varepsilon$, $P_D(\gamma)=\gamma$ if $\gamma\in E_D$, $P_D(\gamma)=\varepsilon$ if $\gamma\in E\setminus E_D$, and $P_D(\lambda\gamma)=P_D(\lambda)P_D(\gamma)$ for all $\lambda\in E^*, \gamma\in E$.
With a slight modification on the edit function in \cite{2018Wu}, an event-based version is defined (to make the exposition simpler) as follows.

 \begin{definition}(Edit function) \label{editfunction}
	An edit function is defined by $f_e: E_o^*\times E_o\rightarrow E_o^*$ such that for $\lambda\gamma\in L(G)$, $\lambda\in E^*, \gamma\in E$\\ i) for $\gamma\in E\setminus E_D$,  $f_e(\lambda,\gamma)=\gamma$;\\ ii) for $\gamma\in E_D$, it holds: 
	\begin{equation*}
	f_e(\lambda,\gamma)=\left\{
	\begin{aligned}
	& \gamma_I\gamma,  \ \gamma_I\in E_D^* \ \textit{\rm is inserted before} \ \gamma; \\ 
	& \gamma_R,  \ \gamma \ \textit{\rm is substituted with} \ \gamma_R\in E_D;\\ 
	& \varepsilon,  \ \gamma \ \textit{\rm is erased}.\\ 
	\end{aligned}
	\right. 
	\end{equation*}
\end{definition}

\vspace{0.5em}
% $f_e(\gamma)=\lbrace\gamma_I\gamma|\gamma_I\in E_D^*\rbrace$ if $\gamma_I$ is inserted before $\gamma$; $f_e(\gamma)=\lbrace\varepsilon\rbrace$ if $\gamma$ is erased; $f_e(\gamma)=\lbrace\gamma_R|\gamma_R\in E_D\rbrace$ if $\gamma$ is replaced with $\gamma_R$.}

Based on the edit function's own observations, one aims to manipulate the actual observations generated by a system so as to corrupt the observations received by the intruder.
Note that an edit function $f_e$ considered in this paper is deterministic: following each observation in $E_D$, it can choose one type of edited operations (insertion, substitution, or deletion) to modify the actual observations; importantly, observations in $E_o\setminus E_D$ cannot be edited since they cannot even be observed by the edit function.

% For notational convenience, we let the domain and codomain of $f_e$ to be $E_o$ and $E_o^*$ to allow an edit function to ``react'' to all observable events of a system.
% That is, $f_e(\gamma)=\gamma$ for $\gamma\in E_o\setminus E_D$ in the sense that the edit function cannot manipulate events that it does not observe, whereas $f_e(\gamma)$ can implement edited operations for $\gamma\in E_D$, i.e., it can output $\gamma_I\gamma$ by inserting $\gamma_I$ before $\gamma$, $\gamma_R$ by substituting $\gamma$ with $\gamma_R$, or $\varepsilon$ by deleting $\gamma$. 

In the following, we assume that $\gamma_I$ is of bounded length\footnote{By introducing an upper bound $K$ on the length of the inserted string, we limit the capabilities of the edit function. This ensures that no strings of unbounded length can be inserted during the insertion operations, providing control over the size and complexity of the edited output.} for the sake of simplicity, i.e., $K\in\mathbb{N}$ is the maximum number for insertions.
We use $E_D^{\leq (K+1)}\subset E_D^*$ to denote the finite set of strings of maximum length $K+1$ that can be chosen by an edit function, which represents all types of edited operations generated by the edit function.
Then, we use $E_o^{\leq (K+1)}\supseteq (E_D^{\leq (K+1)}\cup (E_o\setminus E_D))$ to denote the finite set of strings that can be chosen by an edit function (including the events that cannot observed by the edit function).
For simplicity, we let $E_o^{\leq (K+1)}$ be the codomain of the edit function to keep the notation simple (i.e., $f_e : E_o^*\times E_o\rightarrow E_o^{\leq (K+1)}$).

An edit function can be extended to a string-based version in a recursive manner as: $f_e(\varepsilon)=\varepsilon$ and $f_e(\sigma \gamma)=f_e(\sigma)f_e(\gamma)$ for $\sigma\in E_o^*$ and $\gamma\in E_o$. 
Note that the output of an edit function should be implemented in a causal manner, i.e., it is determined by the current observed event, the previous observations, and the previous edit actions.
For instance, given an observable sequence $\gamma = \gamma_1\gamma_2\in E_o^*$, the edit function $f_e(\gamma_2)$ not only relies on the current event $\gamma_2$, but is also affected by $\gamma_1$ and $f_e(\gamma_1)$. 
Moreover, the edit action $f_e(\gamma) = f_e(\gamma_1)f_e(\gamma_2)$ may not be defined in the original system if either $f_e(\gamma_1)$ or $f_e(\gamma_2)$ chooses an infeasible edit.

\section{Current-state opacity enforcement via edit functions under incomparable observations}
\label{sec:formulation}

\begin{figure*}
 \centering
 \subfigure[Observational setting in \cite{2014Wu, 2018Wu, 2018Ji, 2019Ji}]{
 \includegraphics[scale=1.2]{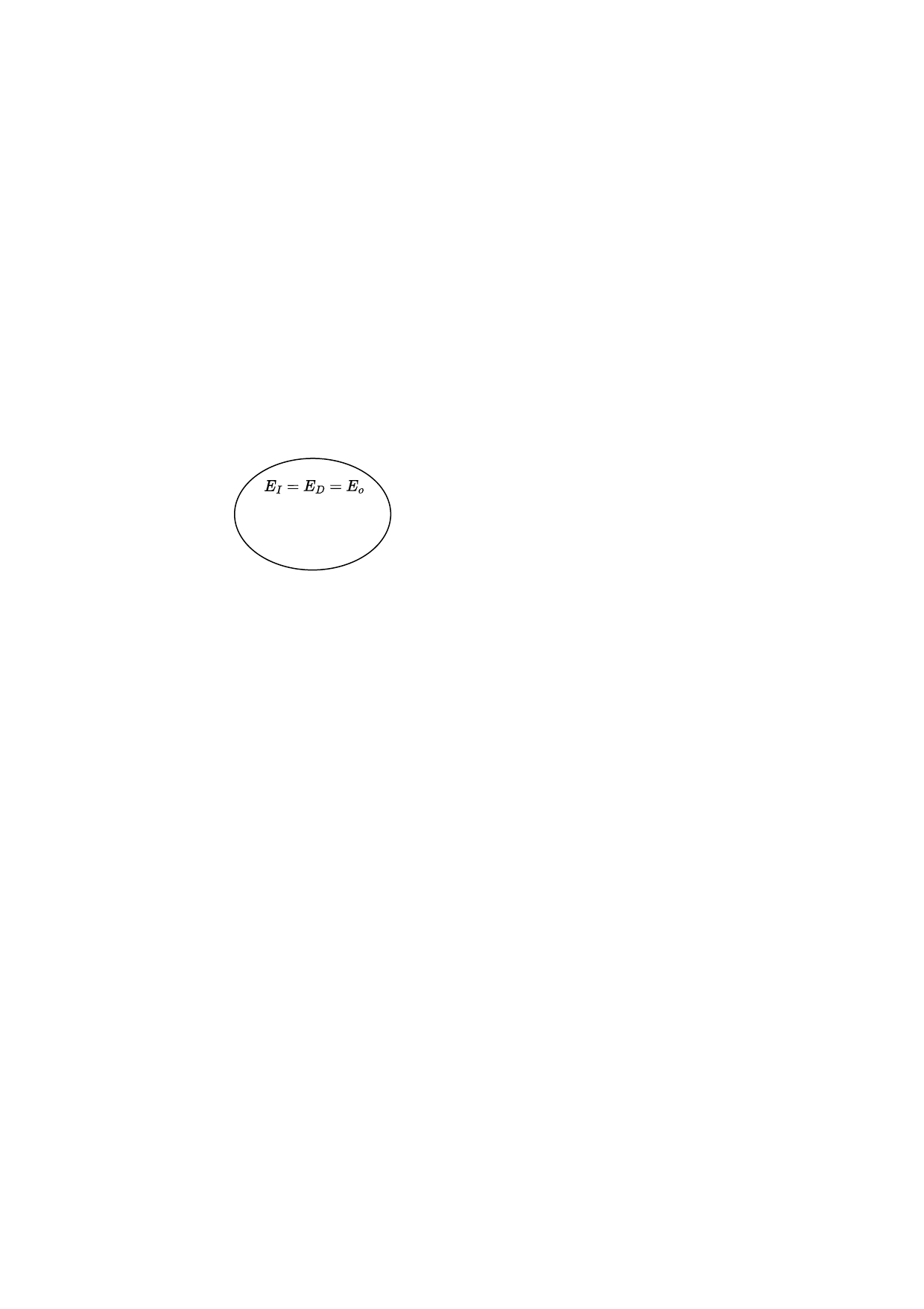}
 }
 \subfigure[Observational setting in \cite{2023Duan}]{
 \includegraphics[width=0.26\textwidth]{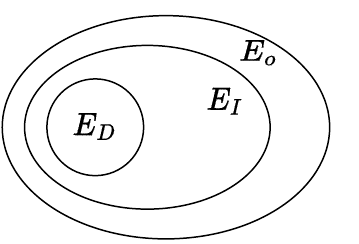}
 }
 \subfigure[Observational setting in this paper]{
 \includegraphics[width=0.32\textwidth]{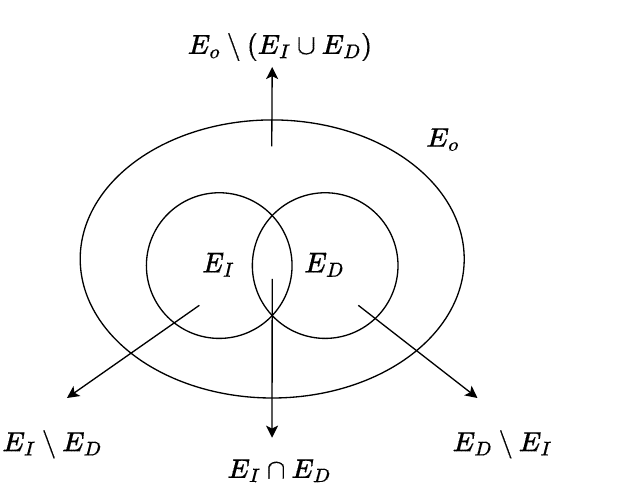}
 \label{setrelationcom}
 }
    \caption{Comparison of observation capability between the intruder (I) and the defender (D) in different works.}
    \label{fig:relation}
\end{figure*}

In this section, we first introduce a general setting by allowing the observations of the edit function for a given system to be \emph{incomparable} to those of the intruder.
Then, the notion of \emph{$ic$-enforceability} is proposed to characterize whether the edit function has the ability to enforce opacity.
In the end, the problem of current-state opacity enforcement via edit functions under incomparable observations is formulated. 

\subsection{Incomparable Observation Setting}

% Understanding the intruder's awareness of the defender in a security context is crucial because it can affect the effectiveness of security measures. An intruder who is aware of the defender may be able to avoid detection or tailor their attack to exploit the defender's vulnerabilities. In contrast, an intruder who is not aware of the defender may be more likely to make mistakes or leave evidence of their presence. Therefore, the intruder's awareness can impact the likelihood of a successful attack and the ease with which it can be carried out.  
In practical scenarios involving the enforcement of opacity for a given system, challenges arise due to the possibility that the intruder and the edit function (defender) might only have access to partial and distinct information extracted from the system.
For instance, the intruder or the edit function may only be able to capture observations generated by the system that are not necessarily the same.
%: (i) either the intruder or the edit function may not be able to capture the full observation of the system; (ii) the intruder may be aware of the existence of the edit function. 
To address such an issue, this paper considers the problem of CSO enforcement via edit functions under a more general setting where the observations captured by the intruder and the edit function are incomparable.

Given a DES modeled by a DFA $G=(X,E,f,x_0)$ with $E=E_o\cup E_{uo}$, the sets of events observed by the intruder and an edit function are $E_I\subseteq E_o$ and $E_D\subseteq E_o$, respectively. 
We introduce the \emph{incomparable observation setting} by considering the following assumptions:
\begin{enumerate}
	\item The intruder and the edit function have full knowledge of the system's structure.
    \item The edit function is aware of the existence of the intruder and its observation capability.
    %\footnote{We adopt this assumption for the sake of simplicity. A more challenging assumption would be that the edit function is unaware of the existence of the intruder, which is discussed later in the paper.}
	\item The intruder and the edit function have incomparable observation capabilities, i.e., $E_I$ and $E_D$ do not necessarily satisfy an inclusion relation.\footnote{For example, given a set of observable events $E_o=\lbrace a,b,c,d\rbrace$, we could have that $E_D=\lbrace a,b,c\rbrace$ and $E_I=\lbrace b,c,d\rbrace$.} 
\end{enumerate} 

Assumption 1) is standard and allows the intruder (the edit function) to attack (defend).
%setting that make the intruder and the edit function be available to attack (i.e., infer the occurrence of the secret states via unauthorized access) and defense (i.e., protect the security of the system by modifying observations perceived by the intruder).
Assumption 2) allows the edit function to know potential risks, e.g., if there exist vulnerable sensors in the system, the edit function can treat the signals outputted by these sensors as observations received by the intruder.
Assumption~3) indicates that the imperfect information of the system can be derived from the intruder and the edit function.
In such a case, the edit function cannot solely design feasible edit actions based on its own observations. 
It also faces the additional challenge of ensuring that the edit actions cannot be recognized by the intruder based on its observations.

% \footnote{A similar setting was investigated in \cite{2018Tong}, which considered incomparable observations between the intruder and the supervisor. However, our approach is more challenging as we address this issue using edit functions, where we also need to ensure that the intruder can recognize the outputs of the edit function.}

\begin{remark}
Note that a similar setting was investigated in \cite{2018Tong} through supervisory control under incomparable observations between the intruder and the supervisor, restricting the behavior of the system. In contrast, we address this issue using edit functions (i.e., by manipulating the observations of the system), which is more challenging since edits must be chosen not only to hide secrets but also to remain unambiguously recognized by the intruder.
Technically, the solution requires the formulation of a two-player game with imperfect information.
To our knowledge, the synthesis and solution of such imperfect-information games for security problems in DESs has not yet been explored.
\end{remark}

In order to highlight the challenges that might incur under the incomparable observation setting considered in this paper, we use the Venn diagrams in Fig.~\ref{fig:relation} to illustrate the logical relation between the sets of events observed by the intruder and the edit function (defender).
In contrast to the prior work \cite{2014Wu, 2018Wu, 2018Ji, 2019Ji, 2023Duan}, when the edit function tries to implement edit actions to react to system behavior, there are four sets (shown in Fig. \ref{setrelationcom}) that have to be taken into account due to the incomparable observation setting: (1) $E_o\setminus (E_I\cup E_D)$, (2) $E_I\setminus E_D$, (3) $E_D\setminus E_I$, and (4) $E_D\cap E_I$.
Specifically, when the system executes an event $\alpha$ that belongs to either the first event set or the second one (e.g., $\alpha\in E_o\setminus (E_I\cup E_D)$ or $\alpha\in E_I\setminus E_D$), the edit function can only directly ``output'' $\alpha$ (without any edit action); since it cannot even observe $\alpha$, the edit function ``outputs'' $\alpha$ (i.e., it does nothing).
On the contrary, when the system executes an event $\alpha$ that belongs to either the third event set or the fourth one (e.g., $\alpha\in E_D\setminus E_I$ or $\alpha\in E_D\cap E_I$), the edit function can react to $\alpha$ since it can observe it; however, at this point, the edit function can also output an edit action that is either observed by the intruder or not (e.g., the edit action belongs to either $E_D\cap E_I$ or $E_D\setminus E_I$, respectively). 
Thus, compared to the prior work in \cite{2014Wu, 2018Wu, 2018Ji, 2019Ji, 2023Duan}, the incomparability of observation capabilities requires a more complex strategy using edit functions to enforce opacity.
To better illustrate our motivations, a running example is given below.

\begin{eg}\label{example1}
	Consider the DFA $G=(X,E,f,x_0)$ in Fig.~\ref{motivation}, where the set of observable events is $E_o=\lbrace a,b,c,d\rbrace$ and the set of secret states is $X_S=\lbrace 5\rbrace$.

    \begin{figure}[htbp]
	\centering
	\includegraphics[width=0.3\textwidth]{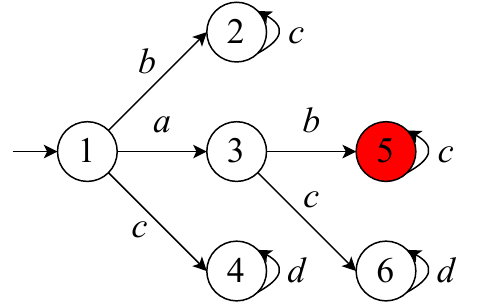}
	\caption{Motivating example.}
	\label{motivation}
    \end{figure}

    The intruder and an edit function have incomparable sets of observations, i.e., $E_I=\lbrace a,b,d\rbrace$ and $E_D=\lbrace b,c,d\rbrace$.
    Thus, the estimate of the system states by the intruder is not necessarily identical to that of the edit function.
    For example, given a sequence $abc\in L(G)$ generated by the system, the intruder can estimate that the system is in state $\lbrace 5\rbrace$ since it can observe $ab$ via $P_I(abc)=ab$, whereas the edit function cannot distinguish whether the system is in state $\lbrace 2\rbrace$ or $\lbrace 5\rbrace$ since it can only observe $bc$ via $P_D(abc)=bc$. 
    In this case,\footnote{On the contrary, if $E_D=\lbrace a,b,d\rbrace$ and $E_I=\lbrace b,c,d\rbrace$, the edit function is able to infer the secret state but the intruder cannot. At this point, the edit function does not need to implement any edit action since the intruder is already confused by the actual observation it received.} even though the edit function does not infer the secret state from its own observations $ab$, it is necessary to implement edit actions as the intruder is able to infer the secret state by its own observations $bc$.
    Under this incomparable observation setting, problems arise in terms of when and how to modify observations using the edit function in a allowable way.

    For simplicity and better illustration, consider the substitution operations acted by the edit function in the running example.
    That is, given the sequence $abc\in L(G)$ generated by the system, the edit function could either (i) output edited sequence $acd$ via $f_e(abc)=acd$ (by replacing $b$ with $c$ and replacing $c$ with $d$) or (ii) $abc$ via $f_e(abc)=abc$ without any edit action (note that the edit function can only perform edit actions on its received observations $bc$, i.e., $P_D(abc)=bc$).
    At this point, the intruder can estimate that the system is in state $\lbrace 6\rbrace$ if $f_e(abc)=acd$ due to observations $ad$ outputted by the edit function, i.e., $P_I(acd)=ad$; or it can estimate that the system is in state $\lbrace 5\rbrace$ if $f_e(abc)=abc$ due to observations $ab$ outputted by the edit function, i.e., $P_I(abc)=ab$.
    Apparently, the latter case is not allowable since the intruder is still able to infer the secret state after edit actions.
    $\hfill\diamond$
\end{eg}

% Our goal is to design two feasible edit function such that it protects the secret behavior of the system under the incomparable observation setting when the edit function is known and unknown.
% Such properties are characterized as ``$ik$-enforceability'' and ``$iu$-enforceability'' in the later subsection.

\vspace{0.5em}

\subsection{$IC$-Enforceability}

The notion of \emph{$ic$-enforceability} is introduced based on the specifications of availability, confidentiality, and integrity in order to characterize the ability of an edit function to confuse the intruder under the incomparable observation setting.

\begin{definition}($i$-availability)
    Consider a system $G$ with respect to the projection $P$. An edit function $f_e$ is $i$-available if for all $\sigma\in P[L(G)], f_e(\sigma)$ is defined.
\end{definition}

\begin{definition}($c$-availability)
    Consider a system $G$ with respect to the projections $P$ and $P_D$. An edit function $f_e$ is $c$-available if for all $\sigma,\sigma'\in P[L(G)]$, $ P_D(\sigma)=P_D(\sigma')$ implies that $f_e(\sigma)$ and $f_e(\sigma')$ are defined such that $P_D(f_e(\sigma))=P_D(f_e(\sigma'))$.
\end{definition}

Availability ensures that an edit function is accessible and usable when needed. 
Specifically, $i$-availability requires that an edit function should react to each observable event generated by the system (recall that, by construction, no modification is allowed if (i) the event is not observed by the edit function or (ii) the event is observed by the edit function but it is substituted with itself). 
Moreover, under the incomparable observation setting, an edit function should also ensure that its output can be recognized by the intruder.
Thus, $c$-availability is proposed to require that an edit function should implement the same edit actions to all sequences that have the same defender projection (the edit function cannot react differently since its observations are identical).
For the sake of simplicity, we say that an edit function is available if it is $i$-available and $c$-available.

\begin{definition}(Confidentiality)\label{defconfidential}
    Consider a system $G$ with respect to the projections $P$ and $P_I$. An edit function $f_e$ is confidential if for all $\sigma\in P(L_S)$, $P_I^{-1}[P_I(f_e(\sigma))]\cap L_{NS}\neq~\emptyset$.
\end{definition}

Confidentiality involves protecting the secret states from unauthorized access by requiring that, for each sequence reaching secret states, an edit function should generate an edited sequence such that the secret states cannot be revealed from the intruder’s estimates.

\begin{remark}
In prior works such as \cite{2014Wu, 2018Wu, 2018Ji, 2019Ji, 2023Duan}, enforcing opacity means that either the insertion function or the edit function must consistently output a sequence that reaches non-secret states, ensuring the intruder cannot deduce the current state of the system as a secret state.
To enhance the flexibility of the edit function, we require that the edit function should only confuse the intruder when necessary.
By taking advantages of the notion of confidentiality formulated in Definition \ref{defconfidential}, the edit function is also able to output a sequence that reaches secret states as long as the system is in non-secret states in reality; that is, the secret states are not reached in reality but we let the intruder believe that they are reached.  
$\hfill\diamond$
\end{remark}

\begin{definition}(Integrity)
    Consider a system $G$ with respect to the projection $P$. An edit function $f_e$ is integral if for all $\sigma \beta\in P[L(G)]$, $f_e(\sigma \beta)=f_e(\sigma)f_e(\beta)$ is available and confidential.
\end{definition}

Integrity refers to the ability of maintaining the accuracy and consistency of the edit actions. 
That is, an edit function should ensure that each subsequent edited sequence maintains availability and confidentiality.

\begin{definition}
    An edit function is $ic$-enforcing if it is available, confidential, and integral.
\end{definition}

\subsection{Problem formulation}

Consider a system modeled by a DFA $G=(X,E,f,x_0)$, where the set of secret states is $X_S\subseteq X$.
Assume that the intruder and an edit function have incomparable observation capabilities of the system.
Our goal is to design an edit mechanism such that any edit function synthesized from it is able to protect the secret states from being revealed by the intruder when the secret states are reached in reality.
The problem is formally presented as follows.
%In the remainder of the paper, we aim to develop two synthesis algorithms for $iu$-enforcing and $ik$-enforcing edit functions. 
 
\begin{problem}
    Consider a system modeled as DFA $G=(X,E,f,x_0)$ with a set of secret states $X_S\subseteq X$, the set of events observed by the intruder $E_I$, and the set of events observed by an edit function $E_D$, where $E_I$ and $E_D$ are incomparable. We aim to construct an edit mechanism for all edit functions to be $ic$-enforcing.
\end{problem}

	\graphicspath{{./figure/}}
	\begin{figure*}[htbp]
		\centering
  	\subfigure[System $\emph G$.]{
		\includegraphics[width=0.3\textwidth]{System.pdf}
			\label{system}
		}
		\subfigure[Intruder observer $\mathbf{O}_I$.]{
			\includegraphics[width=0.26\textwidth]{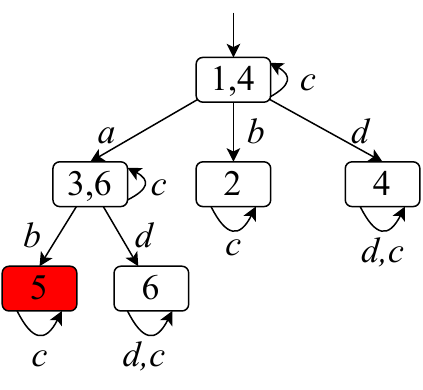}
			\label{DiagramOI}
		}
    \subfigure[Defender observer $\mathbf{O}_D$.]{
		\includegraphics[width=0.16\linewidth]{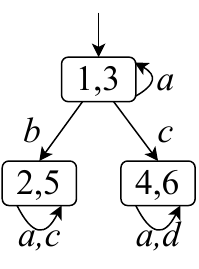}
			\label{DiagramOD}
		}	
		\caption{Illustrations of the observers.}
  \label{DiagramOIOD}
	\end{figure*}
 
% The underlying idea that how to construct an edit mechanism involves five steps (see Fig. \ref{Constructionflow}):
% \begin{enumerate}
%     \item Constructing observers to estimate the current state of a given system from the perspectives of the system, the intruder, and the defender;
%     \item Incorporating the system, intruder, and defender observers in a game manner to construct an edit game structure by assuming the intruder and the defender can observe events in $E_o\setminus E_I$ and $E_o\setminus E_D$;
%     \item Pruning away states that violate the properties of $i$-availability and confidentiality in the edit game structure and resulting in a trimmed game structure;
%     \item Merging states that are reached by sequences unobserved by the edit function in the trimmed game structure such that a no-guarantees edit mechanism is constructed;
%     \item Pruning away states that violate the property of $c$-availability in the no-guarantees edit mechanism and resulting in an edit mechanism.
% \end{enumerate}
% In the subsequent Sections~\ref{sec:game} and \ref{sec:mechanism}, the structure of the corresponding stage will be defined and the algorithms will be provided.

The construction of an edit mechanism involves a systematic process, depicted in Fig. \ref{Constructionflow}, comprising the following key steps:

\begin{figure}[htbp]
    \centering
    \includegraphics[width=0.6\linewidth]{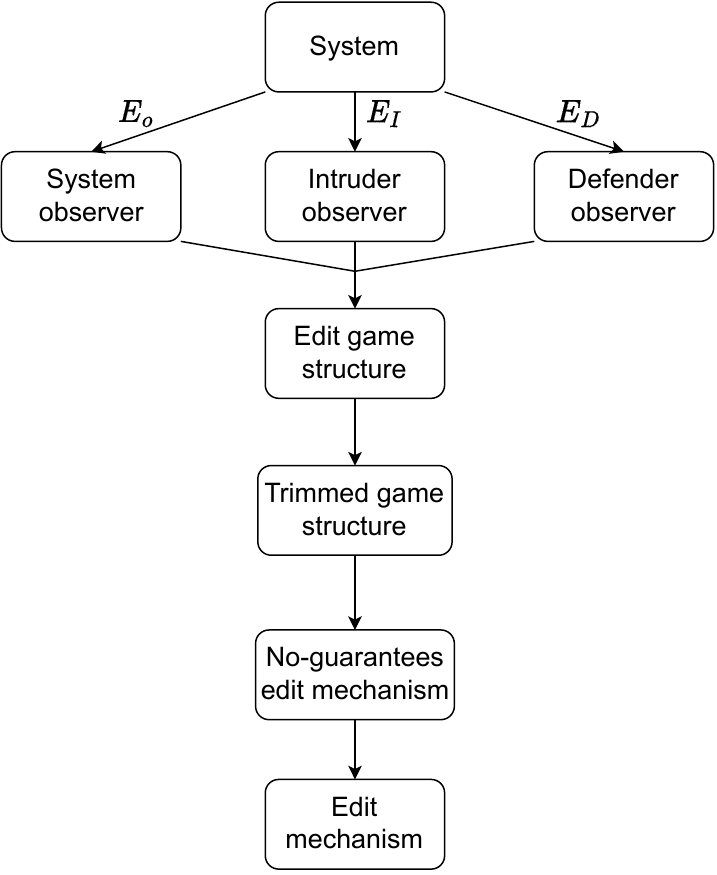}
    \caption{The construction flow of an edit mechanism.}
    \label{Constructionflow}
\end{figure}

\begin{enumerate}
    \item \textbf{Constructing Observers:} Develop observers (\emph{system, intruder, and defender observers}) to estimate the current state of the given system from the perspectives of the system, the intruder, and the defender.
    \item \textbf{Game Structure Construction:} Integrate the system, intruder, and defender observers in a game-like fashion, creating a so-called \emph{edit game structure}. This involves assuming that the intruder and the defender can observe the events in \(E_o\setminus E_I\) and \(E_o\setminus E_D\), respectively.
    \item \textbf{Pruning States:} Trim the edit game structure by removing the states that violate the properties of \(i\)-availability and confidentiality, resulting in a more refined structure, called \emph{trimmed game structure}.
    \item \textbf{Merging States:} Combine the states reached by the sequences unobserved by the edit function in the trimmed game structure. This step leads to the creation of a \emph{no-guarantees edit mechanism}.
    \item \textbf{Final Refinement:} Refine the no-guarantees edit mechanism by pruning away the states that violate the property of \(c\)-availability such that an \emph{edit mechanism} is obtained.
\end{enumerate}
In the subsequent Sections~\ref{sec:game} and \ref{sec:mechanism}, the structure of each step will be precisely defined, accompanied by detailed algorithms to facilitate the implementation of this construction process.

\section{Synthesis of an edit game structure}
\label{sec:game}

This section presents the procedure of building an edit game structure that enumerates all edit actions associated with the system behavior. 
To this end, we first introduce intruder and defender observers to characterize the estimates of a system from the perspectives of the intruder and an edit function. 
Then, the edit game structure is built by considering incomparable observations between the intruder and an edit function. 
In the end, we propose a utility function to identify problematic states reached by undesirable edit actions in the game structure.

\subsection{System Estimate by the Intruder and the Edit Function}

Given a system $G=(X,E,f,x_0)$ and a sequence $\sigma\in E_o^*$, the set of possible states with respect to $P$ starting from a state $x\in X$ is defined as $R_o(x,\sigma)=\lbrace x'\in X\mid \exists \lambda\in E^*:P(\lambda)=\sigma, f(x,\lambda)=x'\rbrace$. 
One can build a system observer (in the standard manner provided in \cite{2020Chrisbook}) to estimate the current states of the system under $E_o$, which is defined as $\mathbf{O}_o=(X_o,E_o,f_o,x_o^0)$, where $X_o\subseteq 2^X$ is the state space, $E_o$ is the set of observable events, $x_o^0=R_o(x_0,\varepsilon)$ is the initial state, and $f_o: X_o\times E_o\rightarrow X_o$ is the transition function defined for $x_o\in X_o$ and $\alpha\in E_o$ as $f_o(x_o,\alpha)=\bigcup_{x\in x_o}R_o(x,\alpha)$ (by convention $f_o(x_o,\alpha)$ is taken to be undefined if $R_o(x_o,\alpha)$ is empty).

%Intruder and defender observers can be constructed in a manner similar to the standard observer proposed in \cite{2010Lafortunebook,2020Chrisbook}.
Given a system $G=(X,E,f,x_0)$ and a sequence $\sigma\in E_I^*$, the set of possible states with respect to $P_I$ starting from a state $x\in X$ is defined as $R_I(x,\sigma)=\lbrace x'\in X\mid \exists \lambda\in E^*:P_I(\lambda)=\sigma, f(x,\lambda)=x'\rbrace$. 
One can build an intruder observer $\mathbf{O}_I$ to estimate the current states of the system under $E_I$ as follows (we allow the events in $E_o\setminus E_I$ to perform a self-loop in $\mathbf{O}_I$ for notational convenience).

\begin{definition}
    Given a system $G=(X,E,f,x_0)$ with respect to $E_o$ and $P_I$, an intruder observer is defined as $\mathbf{O}_I=(X_I,E_o,f_I,x_I^0)$, where $X_I\subseteq 2^X$ is the state space, $E_o$ is the set of observable events, $x_I^0=R_I(x_0,\varepsilon)$ is the set of initial states, and $f_I: X_I\times E_I\rightarrow X_I$ is the transition function defined for $x_I\in X_I$ and $\gamma\in E_I$ as $f_I(x_I,\gamma)=\bigcup_{x\in x_I}R_I(x,\gamma)$, and $f_I(x_I,\beta)=x_I$ for $\beta\in E_o\setminus E_I$.
\end{definition}

Similarly, based on the system $G=(X,E,f,x_0)$ and a sequence $\sigma\in E_D^*$, the set of possible states with respect to $P_D$ starting from a state $x\in X$  is defined as $R_D(x,\sigma)=\lbrace x'\in X\mid \exists \lambda\in E^*:P_D(\lambda)=\sigma, f(x,\lambda)=x'\rbrace$. 
Then, a defender observer, denoted by $\mathbf{O}_D$, is given by the following definition (we allow the events in $E_o\setminus E_D$ to perform a self-loop in $\mathbf{O}_D$ for notational convenience).

\begin{definition}\label{DefOD}
	Given a system $G=(X,E,f,x_0)$ with respect to $E_o$ and $P_D$, and an edit function $f_e$, a defender observer is defined as $\mathbf{O}_D=(X_D,E_o,f_D,x_D^0)$, where  $X_D\subseteq 2^X$ is the state space, $E_o$ is the event set, $x_D^0=R_D(x_0,\varepsilon)$ is the set of initial states, and $f_D: X_D\times E_o\rightarrow X_D$ is the transition function defined as $f_D(x_D,f_e(\gamma))=\bigcup_{x\in x_D}R_D(x,f_e(\gamma))$ for $x_D\in X_D$ and $\gamma\in E_D$ if $f_e(\gamma)$ is defined, and $f_D(x_D,\gamma)=x_D$ for $\gamma\in E_o\setminus E_D$.
\end{definition}

Recall that an edit function confuses the intruder by creating a perturbed output sequence. 
This is captured by the transition function $f_D$ that takes as input the edited outputs.
Since the edit function can only manipulate the events in $E_D$, the transition function $f_D(x_D,f_e(\gamma))$ implements the edit actions if i) observation $\gamma\in E_D$ for $x_D\in X_D$ is received, and ii) $f_D(x_D,f_e(\gamma))$ is defined.
Thus, the transition function $f_D(x_D,f_e(\gamma))$ ensures that the defender observer contains all edit actions in response to every event observed by an edit function; a self-loop is added for all events unobserved by the edit function at each state.

%defended observer that can be constructed by replacing all events in $\mathbf{O}_D$ that are observed by the edit function with the corresponding edited operations. 

% \begin{definition}\label{DefD}
% 	Given a system $G=(X,E,f,x_0)$ with respect to $E_D$, $P_D$ and $f_e$, a defended observer is constructed by $\mathbf{O}_T=(X_T,E_{RH},f_T,x_T^0)$, where $X_T=X_D$ is the state space, $x_T^0=x_D^0$ is the set of initial states, $E_{RH}$ is the set of events, and $f_T$ is the transition function that implements edited operations as $f_T(x_T,f_e(\gamma))=f_D(x_T,f_e(\gamma))$ for $x_T\in X_T$ and $\gamma\in E_D$ if $f_e(\gamma)$ is defined; and $f_T(x_T,\beta)=x_T$ for $\beta\in E_{RH}\setminus E_D$.
% \end{definition}

\begin{eg}
	Consider the system $G$ as shown in Fig. \ref{system}, where the set of secret states is $X_S=\lbrace 5\rbrace$ and the set of observable events is $E_o=\lbrace a,b,c,d\rbrace$.
    In this regard, the system observer is identical to the system itself, i.e., $\mathbf{O}_o=G$.

    Assume that the set of events observed by the intruder is $E_I=\lbrace a,b,d\rbrace\subseteq E_o$.
    One can construct the intruder observer $\mathbf{O}_I$ as shown in Fig. \ref{DiagramOI}; all the events unobserved by the intruder are added as a self-loop at each state, e.g., event $c$ at the initial state.
    Following Theorem~1 in \cite{2007Chris}, we conclude that CSO is violated since there exists a solely secret state $\lbrace5\rbrace$ in~$\mathbf{O}_I$.

    Assume that the set of events observed by an edit function is $E_D=\lbrace b,c,d\rbrace\subseteq E_o$.
    One can construct the defender observer $\mathbf{O}_D$ as shown in Fig. \ref{DiagramOD}.
    From the initial state $\lbrace 1,3\rbrace$ in $\mathbf{O}_D$, the new state is state $\lbrace 2,5\rbrace$ if the edit function receives observation $b$ and replaces $b$ with itself via $f_e(b)=b$ or if it receives observation $c$ and replaces $c$ with $b$ via $f_e(c)=b$.
	There is no update if the edit function inserts $d$ before $b$ via $f_e(b)=db$ since $db$ is not defined (i.e., we cannot find sequence $db$ from the initial state in $\mathbf{O}_D$); or if it receives an observation that cannot be observed, i.e., event~$a$.
 $\hfill\diamond$
\end{eg}

    By taking advantage of the constructions of $\mathbf{O}_I$ and $\mathbf{O}_D$, if the system generates an event unobserved by the intruder (the edit function), the intruder is assumed to ``observe'' (the edit function is assumed to ``react'', i.e., output the same event) via self-loops in the corresponding states.
    Also, one can notice that the estimate of the intruder is not necessarily the same as the estimate of the edit function since their observations are incomparable.

%  \graphicspath{{./figure/}}
% \begin{figure}[htbp]
% 	\centering
% 	\includegraphics[width=0.2\textwidth]{System.pdf}
% 	\caption{System $\emph G$ used in the running example.}
% 	\label{system}
% \end{figure}

\subsection{Edit Game Structure}

To systematically illustrate how an edit function can execute edit actions following system behavior, we first assume that the intruder observes the events in $E_o\setminus E_I$ and the edit function observes (but cannot react to) the events in $E_o\setminus E_D$, which has been captured via the self-loops in the constructions of $\mathbf{O}_I$ and $\mathbf{O}_D$.
Then, a two-player game structure between the system and the edit function is constructed as follows.

\begin{definition}
	\label{editGS}
	Consider a system $G=(X,E,f,x_o)$ with a set of secret states $X_S$, system observer $\mathbf{O}_o=(X_o, \allowbreak   E_o,\allowbreak f_o,\allowbreak x_o^0)$, 
    intruder observer $\mathbf{O}_I=(X_I,\allowbreak E_o,\allowbreak f_I,\allowbreak x_I^0)$, and defender observer $\mathbf{O}_D=(X_D,\allowbreak E_o,\allowbreak f_D, \allowbreak x_D^0)$. 
    An edit game structure is defined as $\mathcal{EGS} = (V, \allowbreak E_o\cup E_o^{\leq (K+1)},\allowbreak  \delta_{ID}\cup\delta_{DI},\allowbreak  v_0)$, where
	\begin{enumerate}
		\item  $V=V_A\cup V_F$ with $V_A= X_o\times X_I\times X_D$ being the set of information states and $V_F=(X_o\times X_I\times X_D)\times E_o$ being the set of information states augmented with observable events;
		\item   $v_0=(x_o^0, x_I^0, x_D^0)\in V_A$ is the initial state;
		\item   $E_o$ is the set of actions for the system and $E_o^{\leq (K+1)}$ is the set of actions for the edit function; 
		\item  $\delta_{ID}=V_A\times E_o \rightarrow V_F$ is the transition function from the system to the edit function defined as: for all $(x_o, x_I, x_D)\in V_A$, for all $\alpha\in E_o, \delta_{ID}((x_o, x_I, x_D),\alpha) =  ((f_o(x_o,\alpha),x_I,x_D),\alpha)$ if $f_o(x_o, \alpha)$ is defined; 
        \item  $\delta_{DI}: V_F\times E_o^{\leq (K+1)}\rightarrow V_A$ is the transition function from the edit function to the system defined as: \\
        a) for all $((x_o, x_I, x_D), \alpha)\in V_F$ with $\alpha\in E_o\setminus (E_I\cup \allowbreak E_D)$, $\delta_{DI}(((x_o, x_I, x_D), \alpha), \alpha)=(x_o, x_I, x_D)$;\\
        b) for all $((x_o, x_I, x_D), \alpha)\in V_F$ with $\alpha\in E_I\setminus E_D$, $\delta_{DI}(((x_o, x_I, x_D), \alpha), \alpha)=(x_o, f_I(x_I,\alpha), x_D)$ if $f_I(x_I, \alpha)$ is defined;\\
        c) for all $((x_o, x_I, x_D), \alpha)\in V_F$ with $\alpha\in E_D\setminus E_I$, for all $\omega\in E_D^{\leq (K+1)}$, we have i) $\omega=\alpha'\in E_D\setminus\lbrace\alpha\rbrace$ via $f_e(\alpha)=\omega$ (substitution), ii) $\omega=\varepsilon$ via $f_e(\alpha)= \varepsilon$ (deletion), and iii) $\omega= \omega'\alpha$ via $f_e(\alpha)= \omega'\alpha$ (insertion), such that $\delta_{DI}(((x_o,x_I,x_D), \alpha), \omega)=((x_o, f_I(x_I,\omega), f_D(x_D, \omega))$ for $\omega\in (E_D\cap E_I)^*$ if both $f_I(x_I, \omega)$ and $f_D(x_D, \omega)$ are defined and $\delta_{DI}(((x_o,x_I,x_D), \alpha), \omega)=((x_o, x_I, f_D(x_D, \omega))$ for $\omega\in (E_D\setminus E_I)^*$ if $f_D(x_D, \omega)$ is defined;\\
        d) for all $((x_o, x_I, x_D), \alpha)\in V_F$ with $\alpha\in E_D\cap E_I$, for all $\omega\in E_D^{\leq (K+1)}$, we have i) $\omega=\alpha'\in E_D\setminus\lbrace\alpha\rbrace$ via $f_e(\alpha)=\omega$ (substitution), ii) $\omega=\varepsilon$ via $f_e(\alpha)= \varepsilon$ (deletion), and iii) $\omega= \omega'\alpha$ via $f_e(\alpha)= \omega'\alpha$ (insertion), such that $\delta_{DI}(((x_o,x_I,x_D), \alpha), \omega)=((x_o, f_I(x_I,\omega), f_D(x_D, \omega))$ for $\omega\in (E_D\cap E_I)^*$ if both $f_I(x_I, \omega)$ and $f_D(x_D, \omega)$ are defined and $\delta_{DI}(((x_o,x_I,x_D), \alpha), \omega)=((x_o, x_I, f_D(x_D, \omega))$ for $\omega\in (E_D\setminus E_I)^*$ if $f_D(x_D, \omega)$ is defined.\\
	\end{enumerate}
\end{definition}

Recall that $E_o^{\leq (K+1)}\subset E_o^*$ denotes the finite set of strings of maximum length $K+1$, where $K\in\mathbb{N}$ is the maximum number for insertions.
In this regard, $E_o^{\leq (K+1)}$ can represent all edit actions implemented by edited operations (including substitutions, deletions, and finite insertions) and events that the edit function cannot observe during the evolution of the transition function $\delta_{ID}: V_F\times E_o^{\leq (K+1)}\rightarrow~V_A$.

For any state $(x_o,x_I,x_D)\in V_A$, we say that it is an information state that contains the state estimates of the system, the intruder, and the edit function.
Then, the system can play by executing an observable event (e.g., $\alpha\in E_o$) such that state $((x_o',x_I,x_D),\alpha)\in V_F$ is reached via $\delta_{ID}((x_o,x_I,x_D),\alpha)=((x_o',x_I,x_D),\alpha)$.
At this point, only state $x_o'$ is updated from $x_o$ via $f_o(x_o,\alpha)=x_o'$ and event $\alpha$ is reserved to help determine what edit actions can be implemented by the edit function in the next step.

\graphicspath{{./figure/}}
\begin{figure*}[htbp]
	\centering
	{
		\includegraphics[width=0.9\textwidth]{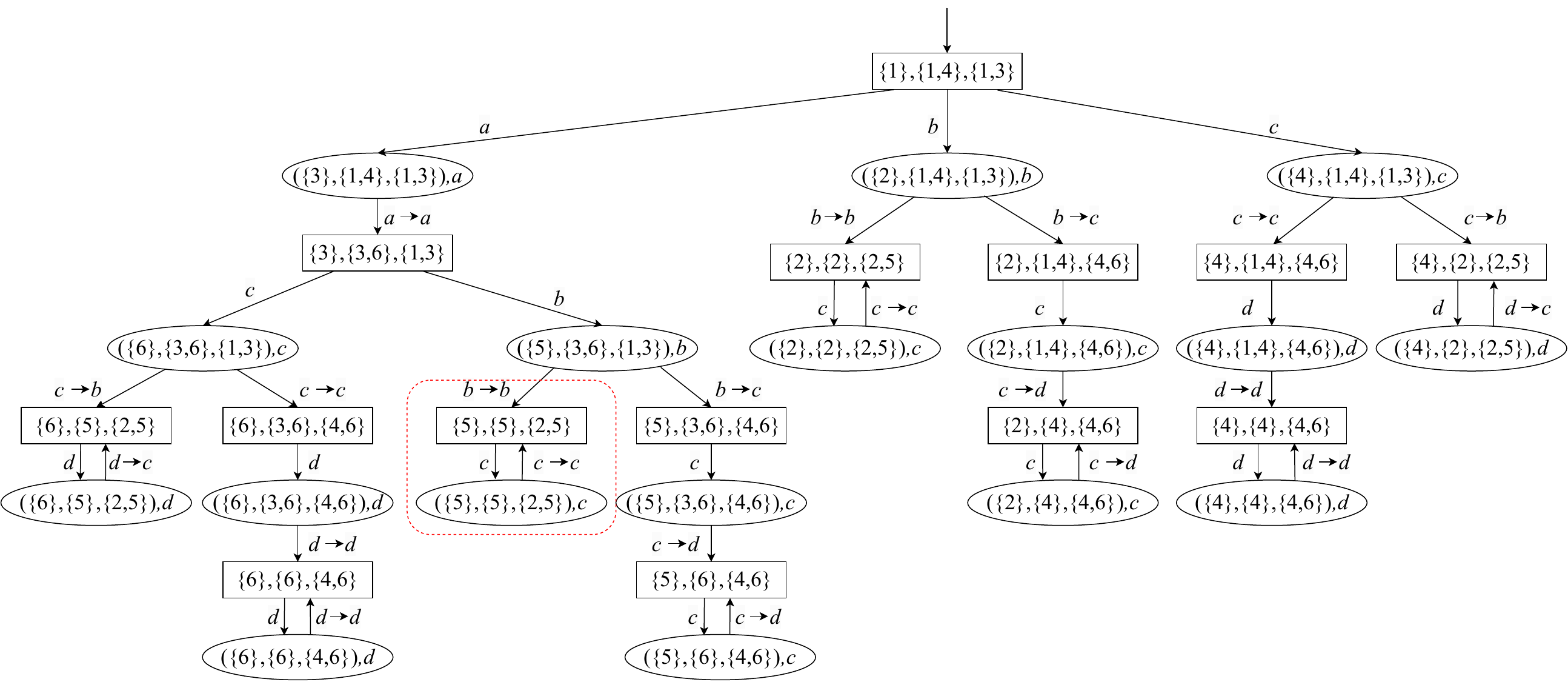}
	}	
	\caption{Edit game structure.}
 \label{EGS}
\end{figure*}

Then, the edit function can play by executing an edit action (i.e., $\omega\in E_o^{\leq (K+1)}$) such that a new information state $(x_o',x_I',x_D')\in V_A$ can be reached via $\delta_{DI}(((x_o',x_I,x_D),\alpha),\omega)=(x_o',x_I',x_D')$, where $x_I'$ and $x_D'$ may be updated depending on how the edit function is implemented. 
Specifically, there are four cases\footnote{
Since edit action $\omega$ generated by the edit function could be a string (due to insertion operations), we also have the case where some elements of $\omega$ are in $(E_o\cap E_I)^*$ and others are in $(E_D/E_I)^*$.
While this introduces additional complexity, it can be captured by the transition function $\delta_{DI}$ in Definition 10 in a straightforward manner.
For simplicity, we only consider four cases as depicted in the definition of $\delta_{DI}$.} that have to be taken into account due to the incomparable observation setting: 
\begin{enumerate}
	\item When the system executes an event $\alpha\in E_o\setminus (E_I\cup E_D)$, this means that it cannot be observed by the intruder or the edit function. 
    Thus, the edit function can only output $\alpha$ such that the estimates of the intruder and the edit function remain unchanged.
	\item When the system executes an event $\alpha\in E_I\setminus E_D$, this means that it can only be observed by the intruder.
    Thus, the edit function can only ``output'' $\alpha$ such that the estimate of the intruder updates and the estimate of the edit function remains unchanged.
	\item When the system executes an event $\alpha\in E_D\setminus E_I$, this means that it can only be observed by the edit function.
    Thus, the edit function can implement edited operations to modify $\alpha$ to $\omega$. 
    If $\omega$ belongs to $(E_D\cap E_I)^*$, the intruder can observe the modified observations such that the estimates of the intruder and the edit function update; if $\omega$ belongs to $(E_D\setminus E_I)^*$, the intruder cannot observe the modified observations such that the estimate of the edit function updates and the estimate of the intruder remains unchanged. 
	\item When the system executes an event $\alpha\in E_D\cap E_I$, both the intruder and the edit function can observe it. 
    Thus, the edit function can implement edited operations to modify $\alpha$ to $\omega$. 
    If $\omega$ belongs to $(E_D\cap E_I)^*$, the intruder can observe the modified observations such that the estimates of the intruder and the edit function update; if $\omega$ belongs to $(E_D\setminus E_I)^*$, the intruder cannot observe the modified observations such that the estimate of the edit function updates and the estimate of the intruder remains unchanged.
\end{enumerate}

In this regard, given the state $(x_o',x_I',x_D')\in V_A$, state $x_o'$ represents the actual estimate of the system, whereas states $x_I'$ and $x_D'$ represent, respectively, the fake estimate of the system from the viewpoints of the intruder and the edit function since the updating of their states is based on the edit action.
With alternate moves between the system and the edit function, $\mathcal{EGS}$ is constructed such that it contains all possible edit actions following system behavior. 

%Notice that the insertion and deletion operations implemented by the edit function can be represented as $\omega=\omega'\alpha\in E_D^*E_D$ and $\omega=\varepsilon\in E_D^*$.
%For uniformity, we use $\omega\in E_D^*$ to denote any edited operation reacted by the edit function.

\begin{eg}\label{egEGS}
	Consider again the system $G$ in Fig. \ref{system} and its system observer, intruder observer, and defender observer in Fig. \ref{DiagramOIOD}.
	Following Definition \ref{editGS}, one can construct the edit game structure $\mathcal{EGS}$ by taking into account the incomparable observations between the intruder and the edit function. 
	To save space, we only consider the substitution operations for the edit function during the construction of $\mathcal{EGS}$ as shown in Fig. \ref{EGS}.
    We use the arrows on the edited operations to represent how the edit function works, e.g., $b\rightarrow c$ means replacing $b$ with $c$.

	Specifically, from the initial state $(\lbrace1\rbrace,\lbrace1,4\rbrace,\lbrace1,3\rbrace)$, the $\mathcal{EGS}$ evolves by following the system behavior (i.e., events $a$, $b$, or $c$).
	If event $a$ occurs, then the initial state will update to $((\lbrace 3\rbrace,\lbrace 1,4 \rbrace,\lbrace1,3\rbrace),a)$ via  $f_o(\lbrace 1\rbrace,a)=\lbrace3\rbrace$.
    At this point, only the estimate of the system observer is updated.
    Then, the edit function can only output $a$ without any edited operation (since it cannot observe $a$).
    Thus, $\mathcal{EGS}$ moves to state $(\lbrace3\rbrace,\lbrace3,6\rbrace,\lbrace1,3\rbrace)$, where the estimate of the intruder observer is updated since the intruder can observe~$a$.
    With alternate moves between the system and the edit function, the rest of the construction $\mathcal{EGS}$ can be completed.

    As an example of how the effects of incomparable observations between the intruder and the edit function are reflected in the construction  $\mathcal{EGS}$, let us consider state $(\lbrace3\rbrace,\lbrace3,6\rbrace,\lbrace1,3\rbrace)$: two events can be executed by the system, i.e., events $b$ and $c$, where $b$ is observed by both the edit function and the intruder while $c$ is only observed by the edit function.
    If event $b$ is executed, the edit function could output $c$ by replacing $b$ with $c$ (which the intruder cannot observe), resulting in a new state $(\lbrace5\rbrace,\lbrace3,6\rbrace,\lbrace4,6\rbrace)$.
    On the other hand, if action $c$ is executed, the edit function could output $b$ by replacing $c$ with $b$ (which the intruder can observe), resulting in a new state $(\lbrace6\rbrace,\lbrace2\rbrace,\lbrace2,5\rbrace)$.
    Thus, a more sophisticated design is needed for the edit function under the incomparable observations to handle all above cases. 
	$\hfill\diamond$
\end{eg}

\subsection{Utility Function}

Although an edit game structure $\mathcal{EGS}$ contains all possible edit actions following the system behavior, not all of them are feasible.
It is important to note that $\mathcal{EGS}$ may give rise to problematic states under two specific cases:
\begin{enumerate}
    \item When an edit function may be incapable of responding to one or more events generated by a system.
    \item When the intruder can deduce the secrets after edit actions performed via an edit function.
\end{enumerate}

In the former case, the $i$-availability of an edit function is compromised, while in the latter, the confidentiality of an edit function is violated.
To identify these problematic states based on the edit game structure, we introduce a utility function as a state-based binary function. 
This process involves an evaluation for each information state, primarily to check if the secret state is revealed from the intruder's estimation. 
Additionally, we check each information state with augmented actions to identify the cases where no edit action can respond to the given system action.

% When the edit function is aware of the existence of the intruder, we are able to identify unfeasible edited operations from the edit game structure when the edit function is unknown.
% That is, we check the intruder's estimate on each information state if the secret state solely exists, and we check each information state with augmented actions if there is no edited operation to respond.

\begin{definition}\label{UF}
	Consider an edit game structure $\mathcal{EGS} = (V,  E_o\cup E_o^{\leq (K+1)},  \delta_{ID}\cup\delta_{DI}, v_0)$, where $V=V_A\cup V_F$. Suppose that an edit function is unknown. A utility function, denoted by $U:V_A\cup V_F\rightarrow\lbrace 0,1\rbrace$, is defined for state $v\in V_A\cup V_F$ such that:
\begin{equation*}
    U(v) = \left\{
    \begin{aligned}
        0, &\quad \text{if } [[v=(x_o,x_I,x_D)\in V_A] \wedge [x_o\subseteq X_S] \wedge \\
           & [x_I\subseteq X_S]] \vee [[v=((x_o,x_I,x_D),\alpha)\in V_F] \wedge \\
           & [(\forall \omega\in E_o^{\leq (K+1)}) [\delta_{DI}(v,\omega)=\emptyset]]], \\
        1, &\quad \text{otherwise.}
    \end{aligned}
    \right.
\end{equation*}
\end{definition}

\vspace{0.5em}

The utility function $U$ captures two problematic states from each of the two types of states in the edit game structure. 
Specifically, state $v=(x_o,x_I,x_D)\in V_A$ is assigned value~0 if the intruder's estimate is a subset of secret states while the system's estimate is also a subset of secret states, i.e., $[x_o\subseteq X_S]\wedge[x_I\subseteq X_S]$.
This is due to the fact that the state $(x_o, x_I, x_D)$ compromises the confidentiality of the edit function, as it allows the intruder to deduce that the system is in secret states while the system is indeed in secret states.
A state of the form $v=[(x_o, x_I, x_D),\alpha]\in V_F$ is assigned value~0 if for all edit actions in the form $\omega\in E_o^{\leq (K+1)}$, $\delta_{DI}(v,\omega)$ is not defined; this means that the edit function is not able to react to event~$\alpha$.
The availability of the edit function is violated at this point.

\begin{eg}
    Continuing with Example \ref{egEGS}, we can compute the problematic states in the edit game structure $\mathcal{EGS}$ via the utility function $U$.
    For instance, state $(\lbrace 5\rbrace,\lbrace 5\rbrace,\lbrace 2,5\rbrace)$ in Fig.~\ref{EGS} is problematic since the intruder observer is at state $\lbrace 5\rbrace$, which means that the intruder can infer the secret state.
    Note that, in this example, there is no problematic state that violates $i$-availability since every event generated by the system has at least one edit action.
        \hfill
        $\diamond$
\end{eg}

\begin{remark}
    In \cite{2014Wu, 2018Wu, 2018Ji, 2019Ji, 2023Duan}, the edit function is required to output a sequence that reaches non-secret states so as to ensure that the intruder cannot deduce the secret states.
    In other words, not only state $(\lbrace 5\rbrace,\lbrace 5\rbrace,\lbrace 2,5\rbrace)$ but state $(\lbrace 6\rbrace,\lbrace 5\rbrace,\lbrace 2,5\rbrace)$ in Fig. \ref{EGS} should be treated as problematic since the system is in secret state $\lbrace5\rbrace$ from the point of view of the intruder.
    In this work, we consider that the objective of the edit function is to confuse the intruder.
    That is, state $(\lbrace 6\rbrace,\lbrace 5\rbrace,\lbrace 2,5\rbrace)$ is admissible since the system is not in a secret state in reality; only state $(\lbrace 5\rbrace,\lbrace 5\rbrace,\lbrace 2,5\rbrace)$ is treated as problematic (because the intruder thinks the system is in a secret state while the system is indeed in a secret state).
    In this regard, we say that the implementation of the edit function under the objective considered in this work is more flexible compared to the prior works \cite{2014Wu, 2018Wu, 2018Ji, 2019Ji, 2023Duan}.
    However, one can easily adjust our approach to achieve the objectives of this earlier work.
    \hfill
$\diamond$
\end{remark}

\section{Synthesis of edit mechanism}
\label{sec:mechanism}

In this section, our aim is to build an edit mechanism derived from an edit game structure such that it contains all feasible edit actions following system behavior observed by an edit function.
To do so, we first construct a trimmed-version of the edit game structure, named a trimmed game structure, by pruning away all problematic states (identified via the utility function) such that all feasible edit actions can be obtained.
Then, the edit mechanism is constructed based on the edit function's observations. 
This process involves a two-step procedure.
At the first step, in the trimmed game structure, we merge the states that are reached by sequences unobserved by the edit function, obtaining a structure, called a no-guarantees edit mechanism, where all feasible edit actions can be implemented by the edit function to react to system behavior that can be observed by the edit function.
At the second step, in the no-guarantees edit mechanism, we check if all edit actions can be recognized by the intruder such that the intruder cannot infer the secret states under the incomparable observation setting (also refer to Fig. \ref{Constructionflow}).

\subsection{Trimmed Game Structure}

The utility function defines two types of problematic states, and problematic states that violate the $i$-availability condition should be pruned away in a recursive manner.
This process can be interpreted as an instance of a  \emph{supervisory control problem without blocking} (SCPB); more details can be found in \cite{cassandras2021introduction}. 

To guarantee that the correct decision can be taken at each state, all states violating the utility function (i.e., with values equal to zero) should fall into the set of problematic states. 
As we are unable to disable the events generated by the system that lead to problematic states, our sole vehicle is to disable edit actions, i.e., the events generated by the edit function, that result in antecedents to these problematic states.
%Since we cannot disable the events generated by the system that lead to the problematic states, the only thing we can do is to disable edit actions (i.e., events generated by the edit function) that lead to antecedents of these problematic states. 
When disabling events, however, new problematic states can be obtained.
We have to iterate the process until no additional such states can be generated. 
In this regard, SCPB can be implemented by a control policy, which is defined as $C:V_F\rightarrow 2^{E_o^{\leq (K+1)}}$, based on information states augmented with observable events in the edit game structure.
For a state $v_F\in V_F$, $C(v_F)$ is the set of controllable events enabled at $v_F$.
Algorithm 1 formally presents the construction process of the
trimmed game structure $\mathcal{TGS}$.

\begin{algorithm}[!htb]	
	\caption{Construction of a trimmed game structure $\mathcal{TGS}$}
	\label{alg_TEM}
	\LinesNumbered
	\KwIn{$\mathcal{EGS}=(V, E_o\cup E_o^{\leq (K+1)}, \delta, v_0)$ with $V=V_A\cup V_F$, $\delta=\delta_{ID}\cup\delta_{DI}$.}
	\KwOut{$\mathcal{TGS}= (V_T, E_o, \delta, v_0)$.}
	Initialize $V_{PA}=\lbrace v_A\in V_A| U(v_A)=0\rbrace$ and $V_{PF}=\lbrace v_F\in V_F| U(v_F)=0\rbrace$\;
    \tcp{$V_{PA}$ ($V_{PF}$) is the set of problematic states in $V_A$ ($V_F$)}
	\For{$v_A\in (V_A\setminus V_{PA})$}{
		\If{$\exists \alpha\in E_o$, $\delta_{ID}(v_A,\alpha)\in V_{PF}$}{
			$V_{PA}=V_{PA}\cup\lbrace v_A\rbrace$\;
		}
	}
	\For{$v_F\in (V_F\setminus V_{PF})$}{
		\If{$\forall \omega\in E_o^{\leq (K+1)}$, $\delta_{DI}(v_F,\omega)\in V_{PA}$}{
			$V_{PF}=V_{PF}\cup\lbrace v_F\rbrace$\;
		}
	}
	Return to line 2 and repeat until no new problematic state is produced\; 
	\For{$v_F\in (V_F\setminus V_{PF})$}{
		\For{$\omega\in E_o^{\leq (K+1)}$}{
			\If{$\delta_{DI}(v_F,\omega)\in V_{PA}$}{
				$C(v_F)=C(v_F)\setminus \omega$\;
			}
		}
	}
   Return $\mathcal{TGS}= (V_T, E_o\cup E_o^{\leq (K+1)}, \delta, v_0^T)$.
   \tcp{$V_T=V\setminus (V_{PA}\cup V_{PF})$ and $v_0^T=v_0$}
\end{algorithm}

\begin{lemma}\label{lemma1}
    An edit function is $i$-available and confidential if and only if it is synthesized from a non-empty trimmed game structure $\mathcal{TGS}$.
\end{lemma}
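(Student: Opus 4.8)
The plan is to prove both directions of the equivalence by relating the recursive pruning process in Algorithm~\ref{alg_TEM} to the definitions of $i$-availability and confidentiality. I would first set up the correspondence between runs of $\mathcal{TGS}$ and edit functions: a run alternates system moves (in $V_A \to V_F$ via $\delta_{ID}$) and edit moves (in $V_F \to V_A$ via $\delta_{DI}$), and a deterministic choice of one outgoing $\delta_{DI}$-transition at each reachable $V_F$-state is precisely a (string-based) edit function $f_e$ defined on the domain of observable behaviors. The key observation is that a state $v_A = (x_o, x_I, x_D)$ carries the system estimate $x_o$, the intruder estimate $x_I$, and the defender estimate $x_D$ after the edits chosen so far, so the utility function $U$ exactly flags the two failure modes: $U(v_A) = 0$ iff confidentiality fails at that point ($x_o \subseteq X_S$ and $x_I \subseteq X_S$), and $U(v_F) = 0$ iff no edit action responds to the just-executed system event, i.e., $i$-availability fails.

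For the ``if'' direction, suppose $f_e$ is synthesized from a non-empty $\mathcal{TGS}$. I would argue by induction on the length of $\sigma \in P[L(G)]$ that the run of $\mathcal{TGS}$ induced by $f_e$ stays within $V_T = V \setminus (V_{PA} \cup V_{PF})$. Since no $V_F$-state in $V_T$ is dead-ended (that is exactly what the iterative removal in lines 6--8 guarantees: a $V_F$-state with all edit actions leading into $V_{PA}$ gets pruned), $f_e(\sigma)$ is always defined, giving $i$-availability. Since no reachable $V_A$-state in $V_T$ has $x_o \subseteq X_S \wedge x_I \subseteq X_S$, whenever $\sigma \in P(L_S)$ we have $x_o \subseteq X_S$ but $x_I \not\subseteq X_S$, so $x_I$ contains a non-secret state, which — by the construction of $\mathbf{O}_I$ and the fact that $x_I$ is the intruder estimate after observing $P_I(f_e(\sigma))$ — yields $P_I^{-1}[P_I(f_e(\sigma))] \cap L_{NS} \neq \emptyset$, i.e., confidentiality. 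Here I would need the lemma (implicit in the observer construction) that $f_I$ applied along the edited output of the run reaches exactly the state $R_I(x_0, P_I(f_e(\sigma)))$.

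For the ``only if'' direction, suppose $f_e$ is $i$-available and confidential. I would show the run of $f_e$ in the \emph{full} $\mathcal{EGS}$ never visits a state with utility $0$, nor any state subsequently added to $V_{PA} \cup V_{PF}$ during pruning. The base utility-$0$ states are avoided directly by the two hypotheses. For the inductively-added states, I would argue contrapositively along the run: if the run reached an added problematic $V_A$-state (one whose every system continuation leads to a problematic $V_F$-state) or an added problematic $V_F$-state (one whose every edit action leads to a problematic $V_A$-state), then, pushing forward along the actual system behavior of $G$, $f_e$ would eventually be forced either to fail to respond to some event (contradicting $i$-availability) or to produce an output revealing a secret (contradicting confidentiality); integrity — the prefix-consistency of the failure modes — is what lets the contradiction propagate to an arbitrary future point. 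Consequently the run lies in $V_T$, so $\mathcal{TGS}$ is non-empty and $f_e$ is realized within it.

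The main obstacle I expect is the careful bookkeeping in the ``only if'' direction: translating ``every future continuation is problematic'' in the game graph into ``$f_e$ must eventually violate availability or confidentiality on some actual string of $L(G)$.'' This requires (i) relating finite paths in $\mathcal{EGS}$ to strings in $L(G)$ via $P$, being mindful of the self-loops for events in $E_o \setminus E_I$ and $E_o \setminus E_D$ and of the four cases in $\delta_{DI}$; and (ii) handling the recursive/fixed-point nature of the pruning — an added problematic state only guarantees an \emph{eventual}, not immediate, violation, so I would phrase the induction on the number of pruning rounds at which a state is removed, showing that the run cannot enter a state removed at round $k$ without entering one removed at round $k-1$. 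A secondary subtlety is confirming that the string-based extension $f_e(\sigma\beta) = f_e(\sigma)f_e(\beta)$ together with causality makes the choice of edit action at each $V_F$-state well-defined and consistent with $c$-availability's later refinement (which this lemma does not yet address, deferring it to the next step of the construction).
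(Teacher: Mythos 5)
Your proposal is correct and follows essentially the same route as the paper's (much terser) proof: both directions rest on the utility function exactly characterizing the state-level violations of $i$-availability and confidentiality, together with the recursive pruning removing precisely the states from which a violation is unavoidable, and your version merely supplies the inductive bookkeeping (runs vs.\ strings, pruning rounds) that the paper omits. One small slip worth noting: in your ``only if'' parenthetical you describe a pruned $V_A$-state as one whose \emph{every} system continuation is problematic, whereas Algorithm~1 prunes a $V_A$-state as soon as \emph{some} continuation is (system moves are uncontrollable), but your surrounding argument---pushing forward along the offending system behavior---is consistent with the correct rule.
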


\begin{proof}
	(If) By contradiction, we assume that the edit function is not $i$-available or confidential.
	Thus, it cannot react to every event observed by the edit function or it cannot modify the sequences generated by the system to edited sequences such that the intruder cannot infer the secret. 
	In other words, one can find a sequence leading to a state $v$ whose the utility function satisfies $U(v)=0$.
	However, by construction of $\mathcal{TGS}$, such sequences have been removed, which is a contradiction.
	Therefore, the edit function should be $i$-available and confidential.
 
	(Only if) Given an edit function that is $i$-available and confidential, it can be retained in $\mathcal{EGS}$ since we build $\mathcal{EGS}$ by following system behavior observed by the edit function. 
	Moreover, we prune away all the states violating the utility function $U$ when we build $\mathcal{TGS}$.
	Therefore, one can conclude that the $i$-available and confidential edit function can be synthesized from $\mathcal{TGS}$.
\end{proof}

\begin{eg}
	Continuing with Example 3, the trimmed game structure $\mathcal{TGS}$ can be constructed from $\mathcal{EGS}$ via SCPB by following Algorithm 1.
    First, the problematic state $(\lbrace5\rbrace,\lbrace5\rbrace,\lbrace2,5\rbrace)$ is obtained via the utility function $U$ in Example 4. 
    Then, edit action $b\rightarrow b$ is disabled at state $[(\lbrace5\rbrace,\lbrace3,6\rbrace,\lbrace1,3\rbrace),b]$ to prevent this problematic state since $b\rightarrow b$ is controllable (since $b$ can be observed by the edit function).
    Note that we do not need to remove any other state since there is no problematic state that violates $i$-availability in this example (in principle, we need to remove such states in a recursive manner if new problematic states are generated after pruning).

    In conclusion, one can prune away all problematic states by following the utility function, which results in the specification for the supervisory control problem.
	The resulting trimmed game structure $\mathcal{TGS}$ is shown in Fig. \ref{EGS} after removing the dotted box with red color.
	$\hfill\diamond$
\end{eg}

\subsection{Edit Mechanism}

Recall that an edit game structure is constructed under the assumption that both the intruder and an edit function possess full observation capability of the system. 
This assumption allows us to create the trimmed game structure, which ensures that every observable sequence generated by the system can be responded by at least one edit action while preventing the system's secrets from exposure.
However, due to the presence of the incomparable observation setting,  we then have to remove the assumption that the edit function has full observation capability of the system, and, instead, focus on its own observations (partial observation capability of the system).

To this end, we introduce an edit mechanism, which is constructed by merging the states from $\mathcal{TGS}$, so as to: (i) incorporate the fact that the events in $E_o\setminus E_D$ are not observable to the edit function;
(ii) ensure that all edit actions are defined at the merged states (otherwise the intruder will infer the existence of the edit function when/if the edit function outputs an edit action that is not recognized from the point view of the intruder).
In other words, we must account for situations where the edit function employs the same edit action as derived from its own observations.
%In essence, this edit mechanism addresses situations where the edit function employs the same edited operations as those derived from its own observations.
The formal procedures are presented in Algorithms \ref{alg_EMs1} and \ref{alg_EMs2}.
%(ii) ensure that all edited operations cannot lead to inconsistency that will be recognized by the intruder.

\begin{algorithm}[!htb]	
	\caption{Construction of a no-guarantees edit mechanism $\mathcal{UEM}$}
	\label{alg_EMs1}
	\LinesNumbered
	\KwIn{$\mathcal{TGS}=(V_T, E_o\cup E_o^{\leq (K+1)}, \delta, v_0^T)$, where $V_T=V_{TA}\cup\allowbreak V_{TF}\subseteq V_A\cup V_F$ and $\delta=\delta_{ID}\cup\delta_{DI}$.}
	\KwOut{$\mathcal{UEM}=(V_U,E_D\cup E_D^{\leq (K+1)},\delta_I\cup\delta_D, v^U_0)$ with $V_U=V_{UA}\cup V_{UF}$.}
	Compute $v^U_0=\{v_0^T\}\cup \{v_{TA}\in V_{TA}\mid\exists \sigma\in(E_o\backslash E_D)^*:\delta(v_0^T,\sigma)=v_{TA}\}$\;
	Initialize $V_U=V_{UA}=\lbrace v^T_0\rbrace$, and $V_{UF}=\emptyset$\;
	\For{$v_{UA}\in V_{UA}$ that have not been examined}{
		\For{$\gamma\in E_D$}{
				$\delta_I(v_{UA},\gamma)=\bigcup_{z\in v_{UA}}\delta_I(z,\gamma)=\lbrace y\in V_{TF}\mid \exists \sigma\in E_o^*: P_D(\sigma)=\gamma\wedge y\in \delta(z,\sigma)\rbrace$ \;
				Add $\delta_I(v_{UA},\gamma)$ to $V_{UF}$\;
			}
		}
	\For{$v_{UF}\in V_{UF}$ that have not been examined}{
		\For{$\omega\in E_D^{\leq (K+1)}$}{
				$\delta_D(v_{UF},\omega)=\bigcup_{z\in v_{UF}}\delta_D(z,\omega)=\bigcup_{z\in v_{UF}}\delta_{DI}(z,\omega)$\;
				Add $\delta_D(v,\omega)$ to $V_{UA}$\;
			}    
		}    
	Go back to line 3 and repeat until all accessible part has been built\;
\end{algorithm}

Algorithm \ref{alg_EMs1} presents the procedure of merging states from $\mathcal{TGS}$ to obtain all edit actions following the system behavior that can be observed by the edit function.
Specifically, we first compute the initial state of the edit mechanism $v_0^U$ in line 1 as the set of states that can be reached from state $v_0^T$ in $\mathcal{TGS}$ via  sequences of events that are unobserved by the edit function.
Then, lines 3 to 6 evolve the initial state $v_0^T$ to a new state $v_1^{TA}\in V_{UA}$ via $\delta_I$ when the edit function receives observation $\gamma\in E_D$ from the sequences $\sigma=\sigma_1\gamma\sigma_2$ generated by the system in $\mathcal{TGS}$, where $\sigma_1,\sigma_2\in (E_o\setminus E_D)^*$.
Lines 7 to 10 evolve state $v_1^{UA}$ to a new state $v_2^{UF}\in V_{UF}$ via $\delta_D$ when the edit function manipulates observation $\gamma$ to $\omega\in E_D^{\leq (K+1)}$ in terms of $f_e(\gamma)=\omega$.
In this regard, the no-guarantees edit mechanism contains all edit actions following system behavior that can be observed by the edit function.

Note that, under the incomparable observation setting, some edit actions might not be recognized from the point view of the intruder, which may allow the intruder to infer the presence of the edit function.
To mitigate this, we propose Algorithm~\ref{alg_EMs2} to check if any edit action can be recognized by the intruder.
If so, we remove such edit actions from the no-guarantees edit mechanism and obtain a new edit mechanism.
Note that this process can also be regarded as an instance of SCPB since it should be executed in a recursive manner.
To do so, we redefine the domain of the control policy as $C:V_{UF}\rightarrow 2^{E_D^{\leq (K+1)}}$, where $C(v_{UF})$ is the set of controllable events enabled at $v_{UF}\in V_{UF}$.

\begin{algorithm}[!htb]	
	\caption{Construction of an edit mechanism $\mathcal{EM}$}
	\label{alg_EMs2}
	\LinesNumbered
	\KwIn{$\mathcal{UEM}=(V_U,E_D\cup E_D^{\leq (K+1)},\delta_I\cup\delta_D, v^U_0)$ with $V_U=V_{UA}\cup V_{UF}$.}
	\KwOut{$\mathcal{EM}=(V_E,E_D\cup E_D^{\leq (K+1)},\delta_I\cup\delta_D, v^E_0)$ with $V_E=V_{EA}\cup V_{EF}$.}
    Initialize $V_{PUA}=\emptyset$ and $V_{PUF}=\emptyset$\;
    \tcp{$V_{PUA}$ ($V_{PUF}$) is the set of states that need to be pruned away in $V_{UA}$ ($V_{UF}$)}
    \For{$v_{UF}\in V_{UF}$}{
		\For{$\omega\in E_D^{\leq (K+1)}$}{
			\If{$\exists z\in v_{UF}$, $\delta_D(z,\omega)$ is not defined}{
				Add $\delta_D(v_{UF},\omega)$ to $V_{PUA}$\;
			}    
		}
	}     	
    \For{$v_{UF}\in V_{UF}\setminus V_{PUF}$}{
			\If{$\forall \omega\in E_D^{\leq (K+1)}$, $\delta_D(v_{UF},\omega)\in V_{PUA}$}{
				Add $v_{UF}$ to $V_{PUF}$\;
			}
	}
	\For{$v_{UA}\in V_{UA}\setminus V_{PUA}$}{
			\If{$\exists\gamma\in E_D, \delta_D(v_{UA},\gamma)\in V_{PUF}$}{
				Add $v_{UA}$ to $V_{PUA}$\;
			}    
	}     	
	Return to line 6 until no new state that needs to be pruned away is produced\; 
	\For{$v_{UF}\in (V_{UF}\setminus V_{PUF})$}{
		\For{$\omega\in E_D^{\leq (K+1)}$}{
			\If{$\delta_{DI}(v_{UF},\omega)\in V_{PUA}$}{
				$C(v_{UF})=C(v_{UF})\setminus \omega$\;
			}
		}
	}
   Return $\mathcal{EM}= (V_E, E_D\cup E_D^{\leq (K+1)}, \delta, v_0^E)$.
   \tcp{$V_E=V_U\setminus (V_{PUA}\cup V_{PUF})$ and $v_0^E=v_0^U$}
\end{algorithm}

We next briefly explain how Algorithm \ref{alg_EMs2} works.
At lines~2 to 5, given a state $v_{UF}\in V_{UF}$ and an edit action $\omega\in E_D^{\leq (K+1)}$, we check if all information states in $v_{UF}$ perform the same edit action.
If there exists one state $z\in v_{UF}$ such that $\delta_D(z,\omega)$ is not defined, it means that the intruder can infer something is wrong since it may distinguish state $z$ among other states in $v$, which is not allowed.
Thus, we mark such states as those that need to be pruned away in $V_{PUA}$.
However, removing such states may violate the $i$-availability of the edit function such that it cannot respond to every system behavior.
To avoid this, we use an approach similar to Algorithm \ref{alg_TEM} so as to prune away states in a recursive manner in lines 6 to 16.

\begin{theorem}\label{theoremEM1}
	An edit function is $ic$-enforcing if and only if it can be synthesized from a non-empty edit mechanism $\mathcal{EM}$.   
\end{theorem}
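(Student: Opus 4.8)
The plan is to prove the two directions of the equivalence by chaining together the characterization already established in Lemma~\ref{lemma1} with the additional structure introduced in Algorithms~\ref{alg_EMs1} and \ref{alg_EMs2}. Recall that an edit function is $ic$-enforcing exactly when it is available ($i$-available and $c$-available), confidential, and integral. Lemma~\ref{lemma1} already gives us that $i$-availability and confidentiality are equivalent to synthesizability from the non-empty $\mathcal{TGS}$. So the entire burden of the theorem is to show that passing from $\mathcal{TGS}$ to $\mathcal{EM}$ (via $\mathcal{UEM}$) precisely captures the two remaining requirements: $c$-availability and integrity. I would state this as the structure of the proof up front, so that the reader sees the argument reduces to two incremental claims layered on top of Lemma~\ref{lemma1}.

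For the ``only if'' direction, suppose $f_e$ is $ic$-enforcing. By Lemma~\ref{lemma1}, $f_e$ can be synthesized from $\mathcal{TGS}$. I would then argue that the merging step in Algorithm~\ref{alg_EMs1} does not discard any behavior of $f_e$: because $f_e$ is an edit function it cannot react to events in $E_o\setminus E_D$, so its decisions are already functions of the defender projection, and hence the states of $\mathcal{TGS}$ reached by sequences with identical $P_D$-image are consistently collapsed in $\mathcal{UEM}$ without loss. Next, since $f_e$ is $c$-available, for any two sequences $\sigma,\sigma'$ with $P_D(\sigma)=P_D(\sigma')$ the edit actions $f_e(\sigma)$ and $f_e(\sigma')$ are defined and agree under $P_D$; I would show this is exactly the condition enforced by the pruning in Algorithm~\ref{alg_EMs2} (lines 2--5 remove precisely the merged states at which some constituent $z$ has $\delta_D(z,\omega)$ undefined), so no transition used by $f_e$ is pruned. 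Finally, integrity is handled by induction on the length of the observed sequence: since every prefix's edit is available and confidential, the run of $f_e$ in $\mathcal{EM}$ stays inside the surviving (non-pruned) states at every step, so $f_e$ is synthesizable from the non-empty $\mathcal{EM}$.

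For the ``if'' direction, suppose $f_e$ is synthesized from a non-empty $\mathcal{EM}$. Since $\mathcal{EM}$ is obtained from $\mathcal{TGS}$ only by merging and further pruning, any run of $f_e$ projects down to a run in $\mathcal{TGS}$; by Lemma~\ref{lemma1} this already gives $i$-availability and confidentiality. For $c$-availability, I would use the fact that states of $\mathcal{EM}$ are sets of $\mathcal{TGS}$-states indexed by the defender projection, and that the recursive pruning of Algorithm~\ref{alg_EMs2} guarantees that at every surviving $v_{UF}$ every constituent $z$ admits the chosen edit action $\omega$ with matching $P_D$-image; hence two sequences with equal $P_D$-image are driven to the same merged state and receive edit actions agreeing under $P_D$. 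Integrity then follows because these properties are preserved transition-by-transition along any run that remains within $\mathcal{EM}$, which is guaranteed precisely because $\mathcal{EM}$ is non-empty and the pruning was carried out to fixpoint (so no reachable surviving state has a forced move into a pruned state). Assembling the three — availability, confidentiality, integrity — yields $ic$-enforceability.

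The main obstacle I anticipate is the bookkeeping in the ``only if'' direction when showing that $c$-availability exactly matches the condition removed by lines~2--5 of Algorithm~\ref{alg_EMs2}: one must be careful that a merged state $v_{UF}$ collects \emph{all} and \emph{only} the $\mathcal{TGS}$-states reachable by $E_o$-sequences with a common $P_D$-image (this is what lines~3--6 and 7--10 of Algorithm~\ref{alg_EMs1} are designed to ensure), and that ``the intruder can recognize the edit action'' is faithfully rendered as ``some constituent $z$ has $\delta_D(z,\omega)$ undefined.'' A secondary subtlety is that the recursive pruning in Algorithm~\ref{alg_EMs2} can in principle re-expose $i$-availability violations, so the induction establishing integrity must explicitly invoke that the pruning is run to a fixpoint (``Return to line~6 until no new state\ldots'') before concluding that a synthesizable $f_e$ never gets stuck. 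I would handle both by making the inductive hypothesis carry all of availability, confidentiality, and integrity simultaneously along prefixes, rather than trying to prove them in sequence.
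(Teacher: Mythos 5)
Your proposal is correct and follows essentially the same route as the paper's own proof: invoke Lemma~\ref{lemma1} to dispose of $i$-availability and confidentiality, and then argue that the passage from $\mathcal{TGS}$ through $\mathcal{UEM}$ to $\mathcal{EM}$ captures exactly the remaining $c$-availability requirement. Your version is in fact more complete than the paper's, which gives only a one-line contradiction argument for $c$-availability and silently omits integrity, whereas you handle integrity explicitly by induction on prefixes and correctly flag that the recursive pruning in Algorithm~\ref{alg_EMs2} must be run to a fixpoint before concluding that a synthesized edit function never gets stuck.
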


\begin{proof}
    An edit function is $ic$-enforcing if and only if it is $i$-available, $c$-available, confidential, and integral.
    Since Lemma \ref{lemma1} established that an edit function is $i$-available and confidential if and only if it can be synthesized from the trimmed game structure, we only prove here that an edit function is $c$-available if and only if it can be synthesized from the edit mechanism.
    
	(If) By contradiction, we assume that the edit function is not $c$-available.
	Thus, it cannot ensure that every edited sequence can be recognized by the intruder; however, this is not allowed in $\mathcal{EM}$, which is a contradiction.
	Therefore, the edit function should be $c$-available.
 
	(Only if) If an edit function is $c$-available, it can be retained in $\mathcal{EM}$ since the transition function in $\mathcal{EM}$ is able to ensure that every edit action can be recognized by the intruder.
    Thus, it can be synthesized from the edit mechanism $\mathcal{EM}$.
\end{proof}

\begin{eg}
	Continuing with Example 5, the trimmed game structures can be transformed to the edit mechanism (shown in Fig. \ref{Editmechanism1}) as follows. 
    We first construct the no-guarantees edit mechanism by merging the states from the trimmed game structures.
    For instance, the initial state in the no-guarantees edit mechanism is $v_0^E=\lbrace v_0,v_1\rbrace$ in terms of $v^E_{0}= \lbrace v_0\rbrace\cup \lbrace v_1\mid\exists aa\in (E_o\setminus E_D)^*: \delta(v_0, aa)= v_1\rbrace$, where $v_0=(\lbrace1\rbrace,\lbrace1,4\rbrace,\lbrace1,3\rbrace)$ and $v_1=(\lbrace3\rbrace,\lbrace3,6\rbrace,\lbrace1,3\rbrace)$ in the trimmed game structure. 
    Then, the rest of the no-guarantees edit mechanism can be built by following Algorithm \ref{alg_EMs1}, as shown in Fig. \ref{Editmechanism1}.

\begin{figure}[htbp]
    \centering
    \includegraphics[width=0.5\textwidth]{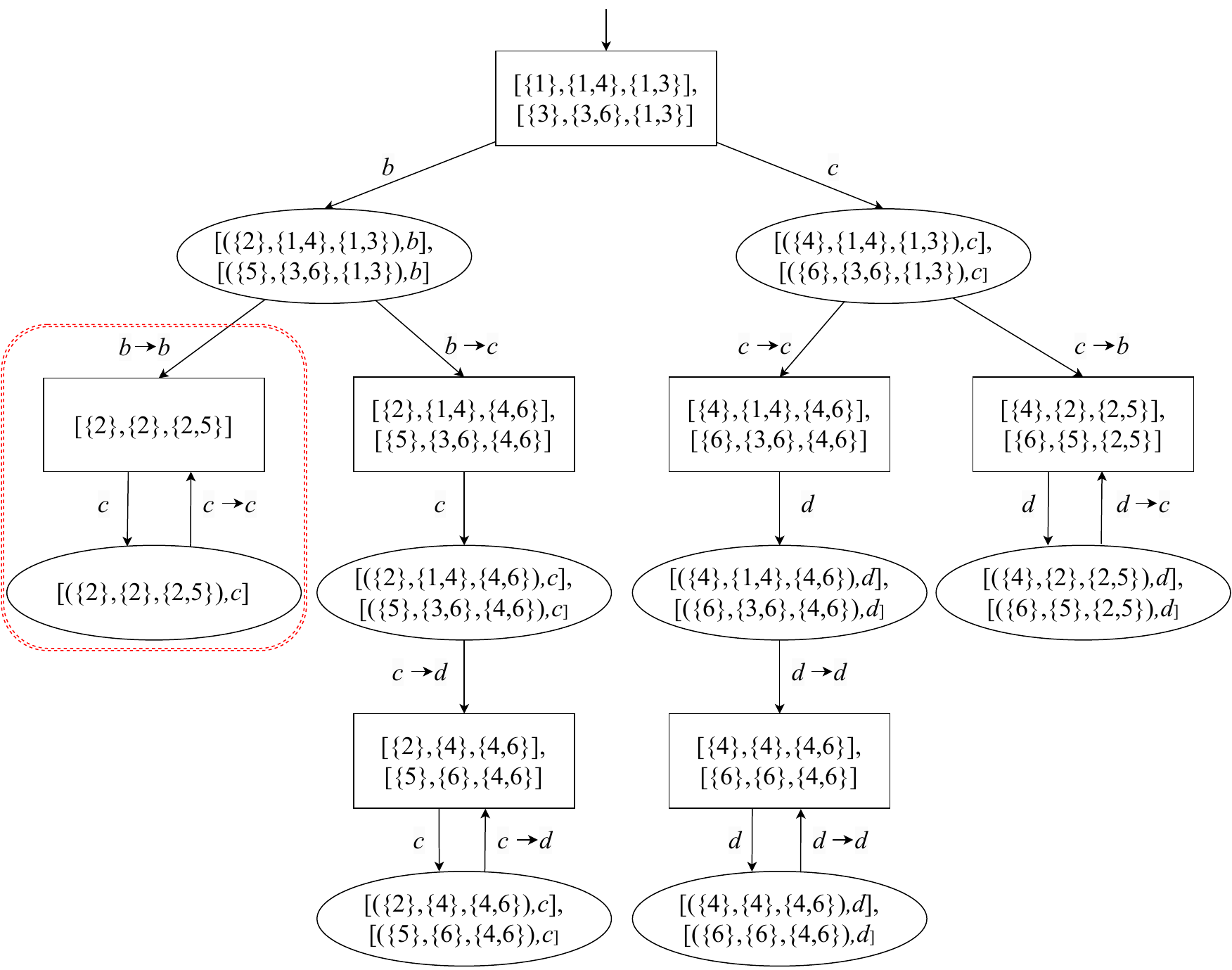}
    \caption{Edit mechanism.}
    \label{Editmechanism1}
\end{figure}

    By following Algorithm \ref{alg_EMs2}, we then construct the edit mechanism from the no-guarantees edit mechanism by pruning away states that do not implement the same edit action in a recursive manner. Consider state $v^E_1=\lbrace[(\lbrace2\rbrace,\lbrace1,4\rbrace,\lbrace1,3\rbrace),b],\allowbreak[(\lbrace5\rbrace,\allowbreak\lbrace3,6\rbrace,\lbrace1,3\rbrace),b]\rbrace$, which is reached from the initial state $v^E_{0}$ if the system executes event $b$ that can be observed by the edit function.
    At this point, one can see that state $v^E_2=\lbrace[(\lbrace2\rbrace,\lbrace2\rbrace,\lbrace2,5\rbrace)]\rbrace$ is reached if the edit function replaces $b$ with itself.
    However, this edit action is not defined at state $[(\lbrace5\rbrace,\lbrace3,6\rbrace,\lbrace1,3\rbrace),b]\in v^E_1$.
    Thus, we can conclude that the edit action that replaces $b$ with itself cannot be implemented on all states in $v^E_1$.
    Such states are removed by Algorithm \ref{alg_EMs2}, such that the resulting edit mechanism is obtained by removing the dotted box with red color in Fig.~\ref{Editmechanism1}.

    According to Theorem \ref{theoremEM1}, one concludes that the edit function synthesized from the edit mechanism in Fig. \ref{Editmechanism1} is $ic$-enforcing.
    At this point, the intruder will not be able to infer the secret state of the system.
    For instance, given a sequence $abc$ generated by the system, the intruder can infer the secret state $\lbrace 5\rbrace$ by following observation $P_I(abc)=ab$ via the system observer $\mathbf{O}_I$.
    The edit mechanism in Fig. \ref{Editmechanism1} reaches state $\lbrace [\lbrace2\rbrace,\lbrace4\rbrace,\lbrace4,6\rbrace], [\lbrace5\rbrace,\lbrace6\rbrace,\lbrace4,6\rbrace]\rbrace$ via its observation $P_D(abc)=bc$ and corresponding edit actions (i.e., replacing $b$ with $c$ and replacing $c$ with $d$).
    At this point, the intruder's estimation may be state $\lbrace4\rbrace$ or $\lbrace6\rbrace$ after receiving the modified observations, which means that it cannot infer the secret.
	$\hfill\diamond$
\end{eg}

\begin{remark}
  Due to the fact that the edit mechanism contains all $iu$-enforcing edit functions, one can always synthesize an edit function from the edit mechanism such that it can enforce opacity of the system.
  Recall that an edit function is designed to interface at the output of a system. That is, it receives the output of the system as its input and modifies it to present a different output to the intruder. 
  Thus, we are able to construct an abstract discrete model (i.e., Mealy machine) to encode the edit function.  
  The reader is referred to Algorithm 5 in \cite{2014Wu} for more details on the synthesizing procedures.
  $\hfill\diamond$
\end{remark}

\section{Discussion: complexity analysis and comparison with existing works}
\label{sec:discussion}

In this section, we first discuss the computational complexity of an edit mechanism. 
Then, we discuss the relationship of our work with existing ones. 

\subsection{Complexity Analysis}

The main structure underlying the problem considered in this paper is an edit mechanism, which is constructed through five steps (see Fig. \ref{Constructionflow}).
To understand the complexity of each step, we calculate the worst-case space complexity of each structure in terms of the previous structure.

\begin{table*}[htbp]
\centering
\begin{tabular}{|c|c|c|c|c|c|c|}
\hline
Works                                    & \cite{2010Dubreil, 2015Yin}    &  \cite{2018Tong} & \cite{2014Wu, 2018Wu} &  \cite{2018Ji, 2019Ji}  &  \cite{2023Duan} & This paper  \\ \hline
Observation capability relation       & $E_I\subseteq E_D$ & No inclusion relation & $E_I=E_D$ & $E_I=E_D$ & $E_D\subseteq E_I$ & No inclusion relation  \\ \hline
The intruder's awareness of the defender &  Yes  &  No &  No  &  Yes  & No &  No \\ \hline
Restriction of the system behavior       &  Yes   &  Yes   &  No & No  & No & No \\ \hline
\end{tabular}
   \caption{Classification of existing approaches for opacity enforcement.}
    \label{tab:my_label}
\end{table*}

Given a DES modeled by a DFA $G=(X,E,f,x_0)$ with $E=E_o\cup E_{uo}$, the sets of events observed by the intruder and the edit function are $E_I\subseteq E_o$ and $E_D\subseteq E_o$, respectively. 

\begin{enumerate}
    \item At step 1: We build the system, intruder, and the defender observers.
    By taking into account the number of states $|X|$ in $G$, $E_I\subseteq E_o$, and $E_D\subseteq E_o$, the system, intruder, and the defender observers have $|X_o|$, $|X_I|$, and $|X_D|$ states, all bounded by $2^{|X|}$.
    \item At step 2: We build an edit game structure $\mathcal{EGS}$, which contains two types of states.
    Thus, the number of states in $\mathcal{EGS}$ is at most $|V|=|V_A|+|V_F|=|X_o||X_I||X_D|+|X_o||X_I||X_D||E_o|\subseteq 2^{3|X|}+2^{3|X|}|E_o|$.
    \item At step 3: We build a trimmed game structure $\mathcal{TGS}$.
    We can say that the number of states in $\mathcal{TGS}$ is at most $|TV|=|V|\subseteq 2^{3|X|}+2^{3|X|}|E_o|$ if there exists no problematic state.
    \item At step 4: We build a no-guarantees edit mechanism $\mathcal{UEM}$.
    Since it is constructed by merging states from $\mathcal{TGS}$, the number of states in $\mathcal{UEM}$ is at most $|V_U|=|V_{UA}|+|V_{UF}|\subseteq 2^{|V_{TA}|}+2^{|V_{TF}|}\subseteq 2^{2^{3|X|}}+2^{2^{3|X|}|E_o|}|E_D|$. 
    \item At step 5: We build an edit mechanism $\mathcal{EM}$. 
    We can say that the number of states in $\mathcal{EM}$ is at most $|V_E|=|V_U|\subseteq 2^{2^{3|X|}}+2^{2^{3|X|}|E_o|}|E_D|$ if there exists no problematic state. 
    Overall, the structural complexity of the edit mechanism is $O(2^{2^{3|X|}(1+|E_o|)})$.
\end{enumerate}

Specifically, the synthesis of our edit mechanism proceeds via two nested powerset constructions (first for the intruder/defender observers, then for the imperfect information game), each of which incurs an exponential growth in the size of the previous structure.
However, solving two-player games with imperfect information is known to be 2-EXPTIME-complete in the general case \cite{chatterjee2010complexity}. 
Thus, the overall worst-case complexity of our construction is doubly-exponential in the number of system states.
In practice, some abstraction techniques \cite{mayr2013advanced} can often reduce the effective state space before game synthesis, yielding tractable instances for realistic models.

\subsection{Comparison with Existing Work}

The problem of opacity enforcement has been extensively studied in the literature, following two primary approaches: supervisor control \cite{2010Dubreil, 2012Chrisa, 2015Yin, 2018Tong} and obfuscation techniques \cite{2014Wu,2016Wu, 2018Wu, 2018Ji, 2019Ji, 2023Duan}. These approaches involve designing an external defender, which can be a controller, an insertion function, or an edit function, to confront the intruder based on various practical assumptions. 
In the following, we provide a comparison between our work and the existing studies by categorizing these assumptions into three key aspects.
For the sake of consistency, we denote the sets of events observed by the intruder and the defender (such as the controller or the insertion/edit function) as $E_I$ and $E_D$, respectively.

\begin{enumerate}
    \item Observation capabilities: The common assumption regarding observation capabilities in opacity enforcement problems is that both the intruder and the defender possess partial but identical observation capabilities of the system, i.e., $E_I=E_D$.
    A challenging scenario arises when there is no inclusion relation between the sets of events observed by the intruder and the defender.
    This situation is termed the ``incomparable observation setting", where $E_I$ and $E_D$ have no inclusion relation.
    Tong $\emph{et al.}$ \cite{2018Tong} have adopted this assumption by using supervisor control theory. 
    However, we deal with it by employing edit functions, which is challenging since we also need to ensure that the intruder can recognize the outputs of the edit function under this incomparable observation setting.
    \item Knowledge of the intruder: 
    The concept of intruder knowledge in opacity enforcement problems using obfuscation techniques explores various scenarios in which the intruder's level of awareness regarding the defender differs. 
    We can categorize these scenarios into two distinct cases: one where the intruder is not aware of the defender's existence (see, for example, \cite{2014Wu, 2018Wu, 2023Duan}), and the other where the intruder is aware of the defender's existence (see, for example, \cite{2018Ji,2019Ji}). 
    In this work, we extensively investigate the first case, whereas the second case will remain open for us to extend under the incomparable observation setting.
    \item Restriction of the system behavior: The approach by using supervisory control theory \cite{2010Dubreil, 2012Chrisa, 2015Yin, 2018Tong} typically focuses on designing a controller that restricts the system's behavior to ensure opacity.
    In contrast, obfuscation techniques \cite{2014Wu,2016Wu, 2018Wu, 2018Ji, 2019Ji, 2023Duan}, including our work, emphasize the use of external mechanisms (i.e., insertion or edit functions) to manipulate observations generated by the system, instead of restricting them.
\end{enumerate}

The comparison against existing work is summarized in Table \ref{tab:my_label}.
In conclusion, our work considers the problem of opacity enforcement via obfuscation techniques (i.e., edit functions), specifically addressing scenarios with incomparable observation capabilities and the case that the intruder is not aware of the defender's presence. 

\section{Conclusions}\label{sec:conclusion}
In this paper, we investigate the problem of opacity enforcement via edit functions within a more complex and general setting, where the edit function and the intruder possess observation capabilities that are incomparable. 
To deal with this issue, the concept of $ic$-enforceability is introduced to provide a systematic characterization of the ability of edit functions to enforce opacity of a system.
Subsequently, a comprehensive edit mechanism is constructed within a game-theoretical scheme under imperfect information and is shown to encompass all edit functions that are $ic$-enforcing.
Our future work will focus on i) extending the work in the cases where the intruder is aware of the edit functions; ii) reducing the computational complexity of the edit mechanisms, thereby advancing the practicality and scalability of the proposed approaches.

\bibliographystyle{IEEEtran}
\bibliography{Ref}

\end{document}